\newcommand{\bbR}{\mathbb{R}}
\newcommand{\bbC}{\mathbb{C}}
\newcommand{\bbH}{\mathbb{H}}
\newcommand{\bbCi}{\mathbb{C}_{\bm i}}
\newcommand{\bbCj}{\mathbb{C}_{\bmj}}
\newcommand{\dt}{\mathrm{d}t}
\newcommand{\dnu}{\mathrm{d}\nu}
\newcommand{\bmj}{\bm j}
\newcommand{\bmi}{\bm i}
\newcommand{\bmk}{\bm k}
\newcommand{\bmmu}{\bm \mu}
\renewcommand{\epsilon}{\varepsilon}
\renewcommand{\phi}{\varphi}
\newcommand{\intinf}{\int_{-\infty}^{+\infty}}
\newcommand{\ip}[1]{\left\langle #1 \right\rangle}
\newcommand{\ie}{\emph{i.e. }}
\newcommand{\Expe}[1]{\mathbf{E}\left\lbrace #1 \right\rbrace}
\newcommand{\conji}[1]{#1^{*\bmi}}
\newcommand{\involi}[1]{\overline{#1}^{\bmi}}
\newcommand{\involj}[1]{\overline{#1}^{\bmj}}
\newcommand{\polarmodj}[1]{#1\bmj\overline{#1}}
\newcommand{\Span}[1]{\mathrm{span}\left\lbrace #1\right\rbrace}
\newcommand{\ve}[1]{\mathbf{#1}}
\newcommand{\mat}[1]{\underline{\mathbf{#1}}}
\newcommand{\Utwo}{\mathsf{U}(2)}
\newcommand{\SUtwo}{\mathsf{SU}(2)}
\newcommand{\matrixspace}[2]{{#1}^{#2 \times #2}}
\newtheorem{proposition}{Proposition}
\newtheorem{lemma}{Lemma}
\begin{document}

\title{A complete framework for linear filtering\\
of bivariate signals}

\author{Julien~Flamant,~\IEEEmembership{Student Member,~IEEE,}
Pierre~Chainais,~\IEEEmembership{Senior~Member,~IEEE,} and~Nicolas~Le Bihan
\thanks{J. Flamant and P. Chainais are with Univ. Lille, CNRS, Centrale Lille, UMR 9189 - CRIStAL - Centre de Recherche en Informatique Signal et Automatique de Lille, 59000 Lille, France. N. Le Bihan is with CNRS/GIPSA-Lab, 11 rue des math\'{e}matiques, Domaine Universitaire, BP 46, 38402 Saint Martin d'H\`{e}res cedex, France. Part of this work has been funded by the CNRS, GDR ISIS, within the SUNSTAR interdisciplinary research program.}}%

\maketitle

\begin{abstract}
  A complete framework for the linear time-invariant (LTI) filtering theory of bivariate signals is proposed based on a tailored quaternion Fourier transform.
  This framework features a direct description of LTI filters in terms of their eigenproperties enabling compact calculus and physically interpretable filtering relations in the frequency domain.
  The design of filters exhibiting fondamental properties of polarization optics (birefringence, diattenuation) is straightforward.
  It yields an efficient spectral synthesis method and new insights on Wiener filtering for bivariate signals with prescribed frequency-dependent polarization properties.
  This generic framework facilitates original descriptions of bivariate signals in two components with specific geometric or statistical properties.
  Numerical experiments support our theoretical analysis and illustrate the relevance of the approach on synthetic data.
\end{abstract}

\begin{IEEEkeywords}
  Bivariate signal, Polarization, LTI filter, Quaternion Fourier transform, Wiener denoising, Spectral synthesis, Decomposition of bivariate signals
\end{IEEEkeywords}

\section{Introduction}
\label{sec:introduction}

\IEEEPARstart{B}{ivariate} signals appear in numerous physical areas such as optics \cite{erkmen2006optical}, oceanography \cite{gonella1972rotary}, geophysics
 \cite{samson1983pure,Roueff2006} or EEG analysis \cite{sakkalis2011review}.
A bivariate signal $x(t)$ is usually resolved into orthogonal components corresponding to real-valued signals $x_1(t)$ and $x_2(t)$.
Then $x(t)$ can be expressed either in vector form $x(t) = [x_1(t)\: x_2(t)]^T$ or as the complex valued signal $x(t) = x_1(t) + \bmi x_2(t)$.
Benefits of each representation have been reviewed recently \cite{Sykulski2017}.

Linear time-invariant (LTI) filtering theory is a cornerstone of signal processing.
Its extension to the case of bivariate signals depends on the chosen representation -- vector or complex form.
The use of the complex representation $x(t) = x_1(t) + \bmi x_2(t)$ leads to the concept of \emph{widely linear filtering} \cite{Picinbono1995,Sykulski2016,Mandic2009,Schreier,Schreier2003}, meaning that the signal $x(t)$ and its conjugate $\overline{x(t)}$ are in general filtered differently.
While the use of the complex representation is often advocated for in the signal processing literature \cite{Schreier,Picinbono1997a}, the use of the vector form $x(t) = [x_1(t)\: x_2(t)]^T$ is more common in physical sciences, \emph{e.g.} polarization optics \cite{born2000principles,J.J.Gil2016}.
The vector $x(t)$ is usually replaced by its analytic signal version -- the so-called \emph{Jones vector}.
LTI filters are then represented in the spectral domain by $2\times 2$ complex matrices called \emph{Jones matrices}.
These matrices describe optical elements or media with fondamental optical properties such as \emph{birefringence} and \emph{diattenuation}.
See \emph{e.g.} \cite{J.J.Gil2007} for a review of the \emph{Jones formalism}.

A complete framework for LTI filtering of bivariate signals should exhibit some desirable properties:
\emph{(i)} a description of bivariate signals as single algebraic objects for simple calculations (in contrast with \emph{e.g.} rotary components \cite{walden2013rotary}), \emph{(ii)} a convenient duality between time and frequency to define easily interpretable Fourier representations, \emph{(iii)} a simple representation of LTI filters in terms of their main properties, such as eigenvectors and eigenvalues (in contrast with \emph{e.g.} Jones matrices or widely linear filters), and
\emph{(iv)} a fast implementation, \emph{e.g.} relying on FFT.
As noticed, existing approaches do not fullfill these properties all at once.

We have recently introduced a powerful alternative approach to bivariate signal processing \cite{Flamant2017timefrequency,flamant2017spectral} using a tailored quaternion Fourier transform (QFT).
The proposed framework exhibits an unifying structure by directly connecting usual physical quantities from polarization to well-defined mathematical (quaternion-valued) quantities such as spectral densities, covariances, time-frequency representations, etc.
It provides at no extra cost an elegant, compact and insightful calculus which highlights the geometric treatment of polarization states.
Note that first attempts in this direction root in optics \cite{Richartz1949,Whitney1971,Pellat-finet1984,Tudor2010a,Tudor2010b} that provide a clear geometric formulation of Jones formalism.
However its generic use for bivariate signal processing is hindered by ignoring phase terms, assuming monochromatism and unpractical implementation.

The QFT framework enables an efficient description of LTI filters and overcomes the limitations of previous approaches by answering all the desirable requirements mentioned above.
In the proposed representation LTI filters are explicitly
given in the spectral domain in terms of their eigenproperties.
It provides clear and economical expressions.
The interaction between LTI filters and bivariate signals is then easy to interpret or prescribe.
It directly relates to fondamental properties of optical media known as \emph{birefringence} and \emph{diattenuation}.
This complete framework provides a new interpretable and generic approach to standard signal processing operations such as spectral synthesis and Wiener filtering for instance.
Moreover it makes natural various original descriptions of bivariate signals in two components with specific geometric or statistical properties.

This paper is organized as follows.
In Section \ref{sec:preliminaries} we gather useful properties of the QFT.
Based on a usual decomposition \cite{J.J.Gil2007,J.J.Gil2016} which separates LTI filters into \emph{unitary} and \emph{Hermitian} ones, Section \ref{sec:filters_for_bivariate_signals} presents a thorough study of each family in the QFT domain.
Section \ref{sec:applications} presents practical applications of those filters: usual ones (spectral synthesis, Wiener filtering) and original decompositions of bivariate signals into two components with prescribed properties.
Section \ref{sec:conclusion} gathers concluding remarks.
Detailed calculations are remitted to appendices.
For the sake of reproducibility, an implementation of the QFT framework along with tools presented in this paper will be available through the open-source Python companion toolbox
\texttt{BiSPy}\footnote{documentation available at \texttt{\url{https://bispy.readthedocs.io/}}}.

\section{Background}
\label{sec:preliminaries}
Section \ref{sub:quaternions} and Section \ref{sub:quaternion_fourier_transform}
present two key ingredients of this work: quaternions and the quaternion Fourier transform.
Section \ref{sub:quaternion_spectral_density} introduces the quaternion spectral density of a bivariate signal, a fundamental quantity that allows numerous physical and geometrical interpretations.

\subsection{Quaternions} 
\label{sub:quaternions}

Quaternions form a four dimensional algebra denoted $\bbH$ and with canonical basis $\lbrace 1, \bmi, \bmj, \bmk\rbrace$, where $\bmi, \bmj, \bmk$ are imaginary units $\bmi^2 = \bmj^2 = \bmk^2 = -1$ such that
\begin{equation}
    \bmi\bmj = \bmk, \: \bmi\bmj = -\bmj\bmi, \: \bmi\bmj\bmk = -1.
\end{equation}
Importantly, like matrix product quaternion multiplication is noncommutative, \ie in general for $p, q \in \bbH$ one has $pq \neq qp$.
Any quaternion $q\in \bbH$ can be written as
\begin{equation}
    q = a + b\bmi + c\bmj + d\bmk
\end{equation}
where $a, b, c, d \in \bbR$. The \emph{scalar} or \emph{real} part of $q$ is $\mathcal{S}(q) = a \in \bbR$ and its \emph{vector} or \emph{imaginary} part is $\mathcal{V}(q) = q - \mathcal{S}(q) \in \Span{\bmi, \bmj, \bmk}$. When $\mathcal{S}(q) = 0$, $q$ is said to be pure.
The quaternion conjugate of $q$ is $\overline{q} = \mathcal{S}(q) - \mathcal{V}(q)$.
Its modulus is $\vert q \vert^2 = q\overline{q} = \overline{q}q = a^2+b^2+c^2+d^2$.
Involutions with respect to $\bmi, \bmj, \bmk$ are defined by $\overline{q}^{\bmi} = - \bmi q \bmi, \: \overline{q}^{\bmj} = - \bmj q \bmj, \: \overline{q}^{\bmk} = - \bmk q \bmk$.
Involutions somehow extend the notions of complex conjugation as they represent reflections, \emph{e.g.} $\overline{q}^{\bmi} = a + b\bmi - c\bmj -d\bmk$.

Quaternions generalize naturally complex numbers.
Concepts such as imaginary units, polar forms extend nicely.
For instance $\bbCj = \Span{1, \bmj}$ or $\bbCi = \Span{1, \bmi}$ are complex subfields of $\bbH$ isomorphic to $\bbC$. As a result, given a pure unit quaternion $\bmmu$ such that $\bmmu^2 = -1$ and $\theta \in \bbR$, one gets $\exp(\bmmu\theta) = \cos\theta + \bmmu\sin\theta$.

As it is essential to our analysis we mention another property of quaternions.
Any quaternion can be represented as a pair of complex numbers.
Let $q = q_1 + \bmi q_2, \: q_1, q_2 \in \bbCj$.
The vector representation of $q$ is the 2-dimensional complex vector $\ve{q} = [q_1, q_2]^T \in \matrixspace{\bbC}{2}_{\bmj}$.
For more about quaternions, the reader is referred to dedicated textbooks \emph{e.g.} \cite{conway2003quaternions}.

\subsection{Quaternion Fourier transform} 
\label{sub:quaternion_fourier_transform}
Several Quaternion Fourier transforms have been proposed so far, see \cite{Hitzer2013} for a review.
We briefly survey the Quaternion Fourier Transform (QFT) first introduced in \cite{Bihan2014} and further studied in \cite{Flamant2017timefrequency}. Recent works \cite{Flamant2017timefrequency,flamant2017spectral} have demonstrated the relevance of this QFT to process bivariate signals.
In particular the QFT decomposes directly bivariate signals into a sum of polarized monochromatic signals.
It also allows novel, natural and direct interpretation of polarization features for bivariate signals.

A bivariate signal written as a $\bbCi$-valued signal reads $x(t) = x_1(t) + \bmi x_2(t)$, where $x_1, x_2$ are real signals.
Suppose for now that $x(t)$ is deterministic.
The QFT of $x(t)$ is then
\begin{equation}\label{eq:definitionQFT}
    X(\nu) \triangleq \intinf x(t)e^{-\bmj2\pi\nu t}\dt = X_1(\nu) + \bmi X_2(\nu) \in \bbH.
\end{equation}
where $X_1, X_2$ are the standard Fourier transform (FT) of $x_1, x_2$, taken as $\bbCj$-complex valued. The inverse QFT is given by
\begin{equation}\label{eq:inverseQFT}
     x(t) = \intinf X(\nu)e^{\bmj2\pi\nu t}\mathrm{d}\nu.
\end{equation}
The QFT (\ref{eq:definitionQFT}) is very similar to the usual FT where the axis $\bmi$ of the FT has simply been replaced by $\bmj$. Importantly, the exponential kernel is located on the right, a crucial point due to the noncommutative nature of the quaternion product. Eq. (\ref{eq:definitionQFT}) shows that a bivariate signal $x(t) \in \bbCi$ has a quaternion-valued spectral description $X(\nu) \in \bbH$. Moreover the QFT of $\bbCi$-valued signals exhibits the $\bmi$-Hermitian symmetry \cite{Bihan2014}
\begin{equation}\label{eq:ihermitian}
    X(-\nu) = \overline{X(\nu)}^{\bmi}.
\end{equation}
Eq. (\ref{eq:ihermitian}) illustrates that for bivariate signals negative frequencies carry no information additional to positive frequencies.
In \cite{Flamant2017timefrequency} we demonstrated that it permits to construct a direct bivariate counterpart of the usual analytic signal by canceling out negative frequencies of the spectrum.
This first tool called the \emph{quaternion embedding of a complex signal} allows identification of both instantaneous phase and polarization (\ie geometric) properties of narrow-band bivariate signals.
This approach can be extended to wideband signals using a a \emph{polarization spectrogram} based on a short-time QFT.
See \cite{Flamant2017timefrequency} for details.

For finite energy signals a generalized Parseval-Plancherel theorem gives yields two invariants:
\begin{align}
    \intinf \vert x(t) \vert^2 \dt &= \intinf \vert X(\nu)\vert^2\dnu,\label{eq:parseval_energy}\\
    \intinf \polarmodj{x(t)} \dt &= \intinf \polarmodj{X(\nu)}\dnu\label{eq:parseval_polar}.
\end{align}
Eq. (\ref{eq:parseval_energy}) is classical, \emph{energy} is conserved.
Eq. (\ref{eq:parseval_polar}) illustrates that an additional quadratic quantity of geometric nature is conserved. Importantly, the term $\polarmodj{X(\nu)} \in \mathrm{span}\lbrace \bmi, \bmj, \bmk\rbrace$ represents a vector in $\bbR^3$ which can be meaningfully interpreted in terms of polarization attributes \cite{flamant2017spectral,Flamant2017timefrequency}.

\subsection{Quaternion spectral density of bivariate signals} 
\label{sub:quaternion_spectral_density}
The QFT has two invariants (\ref{eq:parseval_energy}) and (\ref{eq:parseval_polar}). As a result for finite energy deterministic signals the quantities $\vert X(\nu)\vert^2$ and $\polarmodj{X(\nu)}$ summarize the second-order spectral properties of the bivariate signal $x(t)$. These quantities can be adequatly combined to form a \emph{quaternion energy spectral density}:
\begin{equation}\label{eq:energySpectralDensity}
  \Gamma_{xx}(\nu) = \vert X(\nu)\vert^2 + X(\nu)\bmj\overline{X(\nu)}.
\end{equation}

Many signals however are random and only of finite power, which makes the spectral density definition (\ref{eq:energySpectralDensity}) no longer applicable.
Fortunately thanks to a spectral representation theorem based on the QFT \cite{flamant2017spectral} one can extend the definition (\ref{eq:energySpectralDensity}) to define a quaternion power spectral density for stationary random bivariate signals.
In short the standard QFT $X(\nu)$ is replaced by the spectral increment $\mathrm{d}X(\nu)$: see Appendix \ref{app:spectralRep} for details.
Note however that for ease of notation we will make the slight abuse of writing $X(\nu)$ either when $x(t)$ is random, keeping in mind the correspondence described in Appendix \ref{app:spectralRep}.

The \emph{quaternion power spectral density} of a stationary random signal $x(t)$ reads:
\begin{equation}\label{eq:Gamma_xx_explicit_rwting}
  \Gamma_{xx}(\nu) = \underbrace{S_{0, x}(\nu)}_{\mathrm{scalar\:part}}+ \underbrace{\Phi_{x}(\nu)S_{0, x}(\nu)\bmmu_{x}(\nu)}_{\mathrm{vector\: part}}.
\end{equation}
The scalar part of $\Gamma_{xx}(\nu)$, $S_{0, x}(\nu) \geq 0$ is standard and gives the total\footnote{The term ``total'' refers to the fact that $S_{0, x}(\nu)$ contains power contributions from the \emph{unpolarized} and \emph{polarized} part, see \cite{flamant2017spectral}} power spectral distribution. The vector part of $\Gamma_{xx}(\nu)$ describes the \emph{polarization} properties of $x$ at every frequency.
They are summarized by two parameters: the \emph{polarization axis} $\bmmu_x(\nu)$, a pure unit quaternion, describes the polarization ellipse at this frequency.
The \emph{degree of polarization} $\Phi_x(\nu) \in [0, 1]$ quantifies the balance between
\emph{polarized} and \emph{unpolarized} parts at this frequency.
When $\Phi_{x}(\nu) = 0$ (resp. $ = 1$) the signal is \emph{unpolarized} (resp. \emph{fully polarized}) at $\nu$; else it is \emph{partially} polarized.

Figure \ref{fig:poincaresphere} depicts the Poincar\'e sphere of polarization states.
It allows a direct geometric interpretation of the vector part of the spectral density, \ie of polarization properties.
Normalizing in (\ref{eq:Gamma_xx_explicit_rwting}) the vector part of $\Gamma_{xx}(\nu)$ by the power distribution $S_{0, x}(\nu)$ gives the pure quaternion $\Phi_x(\nu)\bmmu_x(\nu)$.
Given any $\nu$ this quaternion identifies a vector of $\bbR^3$.
It is represented as a point on the surface of Poincar\'e sphere of radius $\Phi_{x}(\nu)$.
This point encoded by the pure unit quaternion $\bmmu_{x}(\nu)$ gives the \emph{polarization ellipse} of the signal at frequency $\nu$.
For instance, $\bmmu_x(\nu) = \bmi$ corresponds to counter-clockwise circular polarization, while $\bmmu_x(\nu) = -\bmj$ corresponds to vertical linear polarization.
Equivalently, $\bmmu_x(\nu)$ can be specified using spherical coordinates $(2\theta, 2\chi)$, giving respectively the orientation $\theta$ and ellipticity $\chi$ of the polarization ellipse; $\bmmu_x(\nu)$ can also be specified in Cartesian coordinates using normalized Stokes parameters,  see \emph{e.g} \cite{born2000principles} for details.
\emph{Orthogonal polarizations} correspond to antipodal points on the Poincar\'e sphere of radius $\Phi_x = 1$:
\emph{e.g.} clockwise and counter-clockwise circular are orthogonal polarizations.
While it may sound disturbing at first, two axes $\bmmu_x$ and $\bmmu_y$ correspond to orthogonal polarizations in the usual sense when they are anti-aligned $\ip{\bmmu_x, \bmmu_y} = -1$.

\begin{figure}
    \centering
    \includegraphics[width=0.4\textwidth]{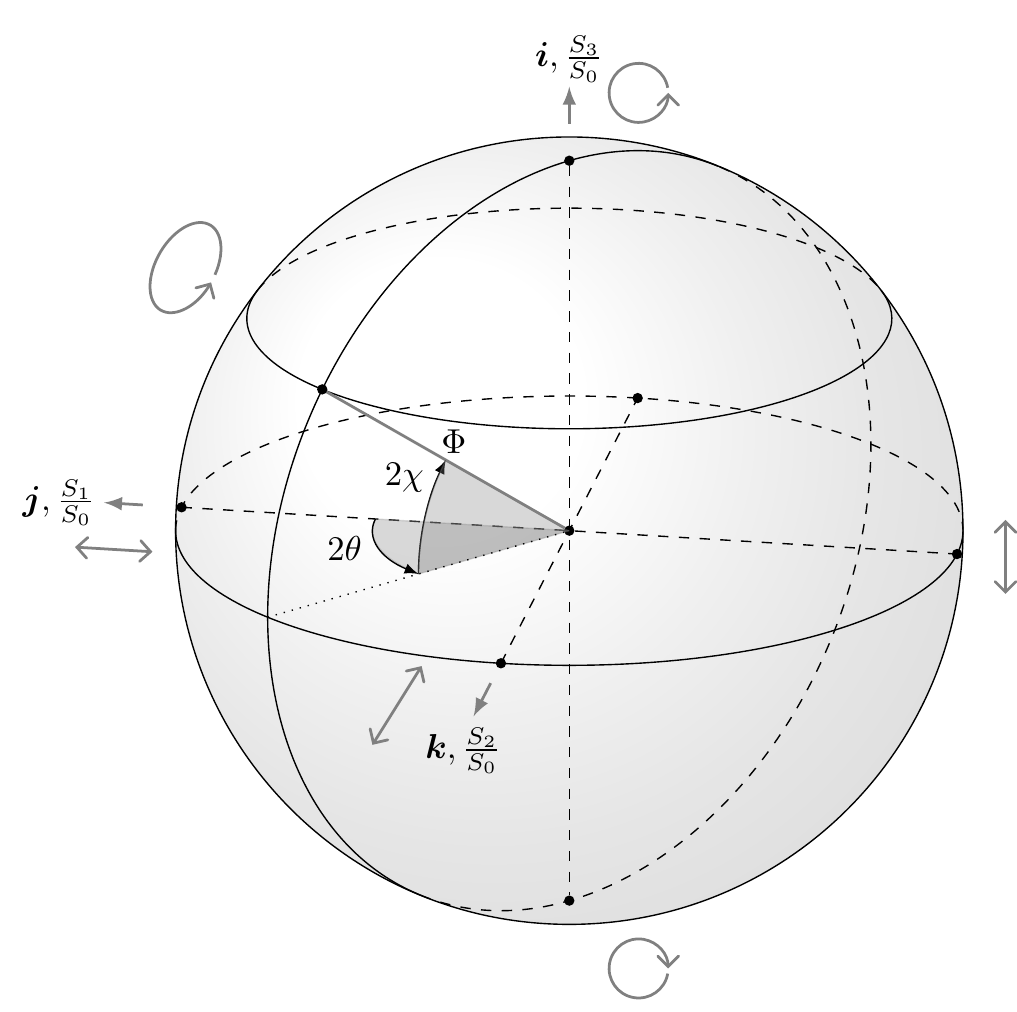}\caption{Poincar\'{e} sphere of polarization states. The vector part of $\Gamma_{xx}(\nu)$ (\ref{eq:Gamma_xx_explicit_rwting}) normalized by $S_{0, x}(\nu)$ identifies a vector in $\bbR^3$ which describes the polarization attributes of $x(t)$ at frequency $\nu$.
    Spherical coordinates $(2\theta, 2\chi)$ gives the orientation $\theta$ and ellipticity $\chi$. The radius $\Phi$ gives the degree of polarization.
    Cartesian coordinates give the normalized Stokes parameters, an alternative characterization of polarization properties \cite{born2000principles}.
}
    \label{fig:poincaresphere}
\end{figure}

\section{LTI filtering for bivariate signals}
\label{sec:filters_for_bivariate_signals}

  The purpose of this section is to write a complete and clean formulation of the theory of linear-time invariant (LTI) filtering for bivariate signals within the QFT framework.

  LTI filters can be classified into two categories: unitary filters and Hermitian filters.
  This decomposition originates from optics, where one usually separates \emph{birefringence} effects (unitary) from \emph{diattenuation} or \emph{dichroism} effects (Hermitian) \cite{J.J.Gil2007,J.J.Gil2016}.
  It is often implicitly assumed that one operates at a single frequency.
  In contrast we provide frequency-dependent expressions for unitary and Hermitian filters to deal with generic wideband bivariate signals.
  It must be pointed out that in general, in the time-domain there is no simple form involving a convolution for these filters.

  The quaternion representation offers a direct description of these filters in terms of \emph{birefringence} and \emph{diattenuation} parameters.
  Precisely, the use of quaternion algebra allows to write unitary and Hermitian filters in terms of eigenvectors and eigenvalues of their matrix representation.
  It explicitely uses the \emph{eigenpolarizations} of the filter, giving a natural way to identify the parameters of each filter.

  Section \ref{sub:matrix_representation} recalls that any LTI filter can be decomposed, at each frequency, into the combination of a unitary and a Hermitian transform.
  Lemmas \ref{prop:unitaryTransform} and \ref{prop:hermitianTransform} give quaternion representations of such transforms.
  Section \ref{sub:unitary_filters} and \ref{sub:hermitian_filters} study \emph{unitary} filters and \emph{Hermitian} filters, respectively.
  We emphasize physical and geometric interpretations of these two filters. See Appendix \ref{sec:app:linear_algebra_and_quaternion_equivalence} for technical details.

  \subsection{Matrix and quaternion representation}
  \label{sub:matrix_representation}
    In the following, time-domain (resp. frequency-domain) quantities are given in lowercase letters (resp. uppercase).
    Scalar quantities (in general, quaternion-valued) are denoted by standard case letters $x, X$.
    Vectors are denoted by bold straight letters $\ve{x}, \ve{X}$ and matrices are written as bold straight underlined letters $\mat{m}$, $\mat{M}$.
    Vector and matrices are always complex $\bbCj$-valued.

    A generic LTI filter is described by its matrix impulse response $\mat{m}(t) \in \matrixspace{\bbC}{2}_{\bmj}$ or by its Fourier Transform (FT) $\mat{M}(\nu) \in \matrixspace{\bbC}{2}_{\bmj}$.
    In the frequency domain the filtering relation between bivariate signals $\ve{x}$ and $\ve{y}$ reads:
    \begin{equation}\label{eq:filteringMatrixUsualFT}
    \ve{Y}(\nu) = \mat{M}(\nu)\ve{X}(\nu).
    \end{equation}
    For each $\nu$, Eq (\ref{eq:filteringMatrixUsualFT}) defines a linear relation between vectors $\ve{Y}(\nu)$ and $\ve{X}(\nu)$.
    For the rest of this section we fix $\nu$ and drop now this dependence.
    The \emph{polar decomposition} \cite{lancaster1985theory} of $\mat{M}$ is
    \begin{equation}\label{eq:polarDecompositionMatrix}
    \mat{M} = \mat{U}\:\mat{H},
    \end{equation}
    where $\mat{U}$ is unitary and $\mat{H}$ is Hermitian semi-definite positive, \ie $\mat{H}^* = \mat{H}$ and its eigenvalues are nonnegative.
    Geometrically (\ref{eq:polarDecompositionMatrix}) decomposes $\mat{M}$ as a stretch (Hermitian matrix $\mat{H}$) followed by a rotation (unitary matrix $\mat{U}$).
    %
    The polar decomposition (\ref{eq:polarDecompositionMatrix}) suggests to study separately two fundamental transforms, respectively \emph{unitary} and \emph{Hermitian} ones.
    Remarkably these two transforms have a direct interpretation in the quaternion representation. In particular parameters are directly related to eigenvectors and eigenvalues of each transform.

    Recall the equivalence between vector and quaternion representations:
    \begin{equation}\label{eq:spectralVectorQuaternionRep}
      \ve{X} = [X_1, X_2]^T \in \bbCj^2 \longleftrightarrow X = X_1 + \bmi X_2 \in \bbH, X_1, X_2 \in \bbCj.
    \end{equation}
    Lemma \ref{prop:unitaryTransform} gives the representation of unitary transforms in the quaternion domain.

    \begin{lemma}[Unitary transform]\label{prop:unitaryTransform}
      Let $\mat{U} \in \Utwo$. Then
      \begin{equation}\label{eq:prop_unitaryTransform}
          \ve{Y} = \mat{U}\ve{X} \Longleftrightarrow Y = e^{\bmmu\frac{\alpha}{2}}Xe^{\bmj\phi}
      \end{equation}
      where $\bmmu^2 = -1$, and $\alpha, \phi \in [0, 2\pi)$.
    \end{lemma}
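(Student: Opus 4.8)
The plan is to reduce the matrix identity to the classical isomorphism between $\SUtwo$ and the group of unit quaternions acting by left multiplication, after peeling off a global phase carried by the determinant. Concretely, I would first record the only noncommutativity rule needed, namely that for any $z \in \bbCj$ one has $\bmi z = \overline{z}\,\bmi$ (equivalently $z\bmi = \bmi\overline{z}$), which follows immediately from $\bmi\bmj = -\bmj\bmi$ together with the fact that $\bmj$ fixes $1$ and negates $\bmj$.

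The core computation is where I expect the main obstacle, since it is the step in which noncommutativity must be tracked carefully. Writing the unit quaternion as $u = u_1 + \bmi u_2$ with $u_1, u_2 \in \bbCj$, I would expand $uX = (u_1 + \bmi u_2)(X_1 + \bmi X_2)$ and repeatedly apply $\bmi z = \overline{z}\,\bmi$ to collect the product in the canonical form $Y_1 + \bmi Y_2$. This yields $Y_1 = u_1 X_1 - \overline{u_2}\,X_2$ and $Y_2 = u_2 X_1 + \overline{u_1}\,X_2$, i.e. the vector identity
\begin{equation*}
\ve{Y} = \begin{bmatrix} u_1 & -\overline{u_2} \\ u_2 & \overline{u_1} \end{bmatrix}\ve{X}.
\end{equation*}
A direct check of orthonormality of the columns shows this matrix lies in $\SUtwo$ precisely when $|u_1|^2 + |u_2|^2 = |u|^2 = 1$, and conversely every element of $\SUtwo$ has exactly this form. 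Hence left multiplication by a unit quaternion realizes the $\SUtwo$ action, and since every unit quaternion admits the polar form $u = e^{\bmmu\frac{\alpha}{2}}$ for some pure unit $\bmmu$ and some $\alpha \in [0,2\pi)$, left multiplication by $e^{\bmmu\frac{\alpha}{2}}$ sweeps out all of $\SUtwo$.

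It then remains to account for the phase factor and to assemble the decomposition. Because $e^{\bmj\phi} \in \bbCj$ and $\bbCj$ is commutative, the right multiplication $X \mapsto X e^{\bmj\phi}$ merely scales each component, so it corresponds to the scalar matrix $e^{\bmj\phi}\mat{I}$; as this commutes with the $\SUtwo$ factor, the combined map $X \mapsto e^{\bmmu\frac{\alpha}{2}} X e^{\bmj\phi}$ corresponds to $\mat{U} = e^{\bmj\phi}\,\mat{V}$ with $\mat{V} \in \SUtwo$, giving one implication at once. For the converse I would start from an arbitrary $\mat{U} \in \Utwo$, write its unit-modulus determinant as $\det\mat{U} = e^{\bmj\beta}$, and set $\phi = \beta/2$ so that $\mat{V} = e^{-\bmj\phi}\mat{U}$ has determinant $1$, hence $\mat{V} \in \SUtwo$; reading $\bmmu$ and $\alpha$ off the polar form of the unit quaternion representing $\mat{V}$ supplies the required parameters. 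A minor point to note is that $\phi$ is determined only modulo $\pi$, the two lifts differing by the sign ambiguity $\mat{V} \mapsto -\mat{V}$ of $\SUtwo$; choosing the appropriate lift keeps both $\alpha$ and $\phi$ inside $[0,2\pi)$, so the stated representation always exists.
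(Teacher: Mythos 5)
Your proof is correct and follows essentially the same route as the paper's: factor out the determinant as a right multiplication by $e^{\bmj\phi}\in\bbCj$ and identify the remaining $\SUtwo$ factor with left multiplication by a unit quaternion in polar form. The only (immaterial) difference is the direction of the core computation — you expand $uX$ to recover the matrix, while the paper specializes its general matrix-to-quaternion formula to the $\SUtwo$ case.
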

    The proof is given in Appendix \ref{sub:app:unitary_transforms}.
    The parameter $\phi$ is the argument of $\det\mat{U}$. When $\phi = 0$, $\mat{U} \in \SUtwo$, \ie
    $\mat{U}$ is unitary with unit determinant, and (\ref{eq:prop_unitaryTransform}) highlights the well known \cite{altmann2005rotations} quaternion representation of special unitary matrices.
    The parameter $\bmmu$ gives the eigenvectors of $\mat{U}$, while $\alpha$ encodes its eigenvalues, see Appendix \ref{sub:app:unitary_transforms}.

    Lemma \ref{prop:hermitianTransform} gives the representation of Hermitian transforms in the quaternion domain.
    \begin{lemma}[Hermitian transform]\label{prop:hermitianTransform} Let $\mat{H} \in \matrixspace{\bbC}{2}_{\bmj}$ be Hermitian positive semi-definite. Then
    \begin{equation}\label{eq:prop_hermitian_transform}
        \ve{Y} = \mat{H}\ve{X} \Longleftrightarrow Y = K[X - \eta \bmmu X \bmj]
    \end{equation}
        where $\bmmu^2 = -1$, $K \in \bbR^+$ and $\eta \in [0, 1]$.
    \end{lemma}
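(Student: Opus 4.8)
The plan is to reduce the general Hermitian positive semi-definite case to a diagonal scaling by the spectral theorem, and then transport both the diagonal factor and the diagonalizing unitary into the quaternion picture using Lemma~\ref{prop:unitaryTransform}. Concretely, I would first write $\mat{H} = \mat{V}\mat{D}\mat{V}^*$ with $\mat{V}\in\Utwo$ collecting orthonormal eigenvectors and $\mat{D} = \mathrm{diag}(\lambda_1,\lambda_2)$, $\lambda_1\geq\lambda_2\geq 0$, the eigenvalues (this ordering can always be arranged by permuting the eigenpairs). Since the correspondence $\ve{X}\leftrightarrow X$ of (\ref{eq:spectralVectorQuaternionRep}) is a fixed real-linear isomorphism, the product $\mat{V}\mat{D}\mat{V}^*$ transports to the composite of the three associated quaternion maps, so it suffices to identify each factor separately and compose.

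The heart of the computation is the diagonal factor. Writing $\ve{Y} = \mat{D}\ve{X}$ componentwise gives $Y = \lambda_1 X_1 + \bmi\lambda_2 X_2$, which I would symmetrize as
\begin{equation*}
  Y = \frac{\lambda_1+\lambda_2}{2}(X_1+\bmi X_2) + \frac{\lambda_1-\lambda_2}{2}(X_1-\bmi X_2).
\end{equation*}
The key identity is $X_1 - \bmi X_2 = \involj{X} = -\bmj X\bmj$, which follows at once from $\bmj X_1\bmj = -X_1$ and $\bmj\bmi\bmj = \bmi$ together with the fact that $X_1,X_2\in\bbCj$ commute with $\bmj$. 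Hence the diagonal factor reads $\mathcal{D}(X) = K_0\,[X - \eta_0\,\bmj X\bmj]$ with $K_0 = (\lambda_1+\lambda_2)/2 \geq 0$ and $\eta_0 = (\lambda_1-\lambda_2)/(\lambda_1+\lambda_2)\in[0,1]$.

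It then remains to conjugate this expression by the unitary factor. By Lemma~\ref{prop:unitaryTransform}, $\mat{V}$ corresponds to $X\mapsto a X b$ for a unit quaternion $a$ and $b = e^{\bmj\phi}\in\bbCj$ with $|b|=1$, so that $\mat{V}^* = \mat{V}^{-1}$ corresponds to $X\mapsto \overline{a}X\overline{b}$. Substituting and using $a\overline{a}=\overline{b}b=1$ collapses the scalar part to $K_0 X$, while the second term becomes $-K_0\eta_0\,(a\bmj\overline{a})\,X\,(\overline{b}\,\bmj\, b)$. Here $\overline{b}\,\bmj\,b = \bmj$ because $b\in\bbCj$ commutes with $\bmj$, and $\bmmu \defeq a\bmj\overline{a}$ is a pure unit quaternion, since conjugation by a unit quaternion preserves both the modulus and the vanishing of the scalar part. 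This yields exactly $Y = K[X - \eta\,\bmmu X\bmj]$ with $K=K_0$, $\eta=\eta_0$, establishing the forward implication.

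For the converse I would run the construction backwards: given $K\geq 0$, $\eta\in[0,1]$ and a pure unit $\bmmu$, choose a unit quaternion $a$ with $a\bmj\overline{a}=\bmmu$ (possible since the unit quaternions act transitively by conjugation on the unit sphere of pure quaternions), take $\mat{V}\in\SUtwo$ realizing $X\mapsto aX$ through Lemma~\ref{prop:unitaryTransform}, and set $\mat{D}=\mathrm{diag}\big(K(1+\eta),K(1-\eta)\big)$; then $\mat{H}=\mat{V}\mat{D}\mat{V}^*$ is Hermitian positive semi-definite and induces the stated map. The remaining points are bookkeeping rather than genuine obstacles: securing $\eta\in[0,1]$ (handled by ordering $\lambda_1\geq\lambda_2$, with any sign ambiguity absorbed into $\bmmu\mapsto-\bmmu$) and the degenerate case $\mat{H}=0$, where $K=0$. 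The one substantive step, on which the whole correspondence hinges, is the diagonal identity $X_1-\bmi X_2 = -\bmj X\bmj$.
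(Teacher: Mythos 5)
Your proof is correct, but it follows a genuinely different route from the paper's. The paper works directly from the general matrix-to-quaternion formula (\ref{eq:app_general_filter}): it imposes the Hermitian structure on the entries ($a,d\in\bbR$, $c=\overline{b}$), simplifies to $Y=\tfrac12(a+d)X-\tfrac12\left(2b\bmk+(a-d)\bmj\right)X\bmj$, and reads off $K$, $\eta$, $\bmmu$ as explicit closed-form functions of the matrix entries, only afterwards remarking that $K=(\lambda_1+\lambda_2)/2$ and $\eta=(\lambda_1-\lambda_2)/(\lambda_1+\lambda_2)$. You instead diagonalize $\mat{H}=\mat{V}\mat{D}\mat{V}^*$, establish the diagonal case via the identity $X_1-\bmi X_2=\involj{X}=-\bmj X\bmj$ (which is correct, and is the same identity the paper uses to derive (\ref{eq:app_general_filter}) in the first place), and then conjugate by the unitary factor through Lemma \ref{prop:unitaryTransform}; your computation $\overline{b}\,\bmj\,b=\bmj$ and $\bmmu=a\bmj\overline{a}$ is right, and the transitivity argument makes the converse clean. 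What your approach buys is structural insight: it makes manifest why $K$ and $\eta$ depend only on the eigenvalues and why $\bmmu$ is exactly $\bmj$ transported by the unit quaternion implementing the eigenbasis, i.e.\ why $\bmmu$ encodes the eigenpolarizations. What the paper's approach buys is explicit parameter formulas in terms of the raw matrix entries $a,b,d$, which is convenient for identification, and it avoids invoking the spectral theorem and Lemma \ref{prop:unitaryTransform} as prerequisites. Your handling of the edge cases (ordering $\lambda_1\geq\lambda_2$, absorbing signs into $\bmmu\mapsto-\bmmu$, and $\mat{H}=0$) is adequate.
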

    The proof is given in Appendix \ref{sub:app:hermitian_transforms}.
    The parameter $\bmmu$ encodes the eigenvectors of $\mat{H}$. Parameters $K$ and $\eta$ depend on, respectively, the sum and difference of eigenvalues, see Appendix \ref{sub:app:hermitian_transforms}.

    The quaternion representation allows a direct interpretation and control of each transform parameters.
    More importantly these key results enable efficient design of \emph{unitary} and \emph{Hermitian} filters, see Sections \ref{sub:unitary_filters} and \ref{sub:hermitian_filters} below.

  \subsection{Unitary filters}
  \label{sub:unitary_filters}
  A unitary filter performs a unitary transform for each frequency.
  Such filter only modifies the \emph{polarization axis} of the input signal: the total PSD and degree of polarization are not affected.
  It is defined by three frequency-dependent quantities: a \emph{birefringence axis} $\bmmu(\nu)$, a \emph{birefringence angle} $\alpha(\nu)$ and \emph{phase} $\phi(\nu)$. The parameter $\phi(\nu)$ is classical and quantifies the \emph{time delay} associated to each frequency.
  Quantities $\bmmu(\nu)$ and $\alpha(\nu)$ model \emph{birefringence} \cite{J.J.Gil2016, J.J.Gil2007}.
  This phenomenom is of fundamental importance in many areas \emph{e.g.} optical fiber transmission \cite{Gordon2000,Francia1998}.

  Proposition \ref{prop:unitary_filter} gives the unitary filtering relation for bivariate signals.
  Relations between corresponding quaternion spectral densities are given below, which permit further physical and geometric interpretations.
  \begin{proposition}[Unitary filter]\label{prop:unitary_filter}
      Let $x$ be the input and $y$ be the output of the unitary filter, with respective QFTs $X$ and $Y$.
      The filtering relation is
      \begin{equation}\label{eq:filtering_unitary}
         Y(\nu) = e^{\bmmu(\nu)\frac{\alpha(\nu)}{2}}X(\nu)e^{\bmj\phi(\nu)},
      \end{equation}
      with $\bmmu(-\nu) = \involi{\bmmu(\nu)}$, $\alpha(-\nu) = \alpha(\nu)$ and $\phi(-\nu) = -\phi(\nu)$.
      The spectral density of $y$ is
      \begin{equation}\label{eq:interference_unitary}
          \Gamma_{yy}(\nu) = e^{\bmmu(\nu)\frac{\alpha(\nu)}{2}}\Gamma_{xx}(\nu)e^{-\bmmu(\nu)\frac{\alpha(\nu)}{2}}
      \end{equation}
  \end{proposition}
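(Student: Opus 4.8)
The plan is to evaluate $\Gamma_{yy}(\nu)$ directly from the definition (\ref{eq:energySpectralDensity}), $\Gamma_{yy} = \vert Y\vert^2 + Y\bmj\overline{Y}$, by substituting the filtering relation (\ref{eq:filtering_unitary}) and treating the scalar and vector parts separately. I fix $\nu$ and drop the dependence throughout. For a random $x$ the identical manipulation applies after reading $X$ as the spectral increment (per the convention of Section \ref{sub:quaternion_spectral_density}): the deterministic filter factors $e^{\bmmu\alpha/2}$ and $e^{\bmj\phi}$ pull out of the expectation unchanged, so the argument reduces to the deterministic one.

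For the scalar part I would invoke multiplicativity of the quaternion modulus, $\vert pq\vert = \vert p\vert\,\vert q\vert$. Since $\bmmu^2=-1$ and $e^{\bmmu\alpha/2}$, $e^{\bmj\phi}$ are unit quaternions, $\vert Y\vert = \vert e^{\bmmu\alpha/2}\vert\,\vert X\vert\,\vert e^{\bmj\phi}\vert = \vert X\vert$, whence $\vert Y\vert^2 = \vert X\vert^2$. This records that a unitary filter leaves the total PSD $S_{0,x}$ untouched.

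The vector part is where the form of (\ref{eq:interference_unitary}) emerges. First I would conjugate (\ref{eq:filtering_unitary}) using the anti-automorphism property $\overline{pq}=\overline{q}\,\overline{p}$ together with $\overline{e^{\bmmu\theta}}=e^{-\bmmu\theta}$, obtaining $\overline{Y}=e^{-\bmj\phi}\overline{X}e^{-\bmmu\alpha/2}$. Substituting into $Y\bmj\overline{Y}$ gives $Y\bmj\overline{Y} = e^{\bmmu\alpha/2}X\,(e^{\bmj\phi}\bmj e^{-\bmj\phi})\,\overline{X}\,e^{-\bmmu\alpha/2}$. The one point that genuinely uses the noncommutative geometry, rather than routine algebra, is that the right-hand phase factor $e^{\bmj\phi}$ lies in the commutative subfield $\bbCj$ and hence commutes with $\bmj$, so $e^{\bmj\phi}\bmj e^{-\bmj\phi}=\bmj$. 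The phase thus cancels, leaving $Y\bmj\overline{Y}=e^{\bmmu\alpha/2}(X\bmj\overline{X})e^{-\bmmu\alpha/2}$; physically, the time delay carried by $\phi$ cannot alter the polarization geometry.

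Combining the two parts and using that the real scalar $\vert X\vert^2$ commutes with $e^{\bmmu\alpha/2}$ (so $\vert X\vert^2 = e^{\bmmu\alpha/2}\vert X\vert^2 e^{-\bmmu\alpha/2}$), I would factor
\[
\Gamma_{yy} = e^{\bmmu\frac{\alpha}{2}}\big(\vert X\vert^2 + X\bmj\overline{X}\big)e^{-\bmmu\frac{\alpha}{2}} = e^{\bmmu\frac{\alpha}{2}}\Gamma_{xx}\,e^{-\bmmu\frac{\alpha}{2}},
\]
which is (\ref{eq:interference_unitary}). As a consistency check I would verify that the stated symmetries $\bmmu(-\nu)=\involi{\bmmu(\nu)}$, $\alpha(-\nu)=\alpha(\nu)$, $\phi(-\nu)=-\phi(\nu)$ are exactly those needed for $Y$ to inherit the $\bmi$-Hermitian symmetry (\ref{eq:ihermitian}) from $X$, so that $y$ is again a genuine bivariate signal: applying the $\bmi$-involution (an algebra automorphism with $\involi{\bmj}=-\bmj$) to (\ref{eq:filtering_unitary}) reproduces the relation evaluated at $-\nu$. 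The only real care required anywhere is tracking the order of factors under conjugation and keeping the two distinct imaginary axes $\bmmu$ and $\bmj$ separate; once the $e^{\bmj\phi}$–$\bmj$ commutation is spotted, no further obstacle remains.
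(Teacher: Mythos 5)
Your proposal is correct and follows the same route as the paper's (sketched) proof: take the filtering relation from Lemma \ref{prop:unitaryTransform} and plug it into the spectral density definition (\ref{appeq:quaternionSpectralDensity}), treating the scalar part via unit-modulus multiplicativity and the vector part via conjugation and the commutation of $e^{\bmj\phi}$ with $\bmj$. You simply fill in the algebraic details the paper omits, including the symmetry check that the paper states separately after the proof.
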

  \begin{proof}[Sketch of proof]
    Eq. (\ref{eq:filtering_unitary}) is obtained directly from Lemma \ref{prop:unitaryTransform}.
    To obtain (\ref{eq:interference_unitary}) use the correspondence described in Appendix \ref{app:spectralRep}.
    Plugging (\ref{eq:filtering_unitary}) into the spectral density definition (\ref{appeq:quaternionSpectralDensity}) yields (\ref{eq:interference_unitary}).
  \end{proof}

  Symmetry conditions in (\ref{eq:filtering_unitary}) ensure that the $\bmi$-Hermitian symmetry (\ref{eq:ihermitian}) is satisfied for $Y(\nu)$ so that $y(t)$ is $\bbCi$-valued.
  Plugging (\ref{eq:Gamma_xx_explicit_rwting}) into (\ref{eq:interference_unitary}) yields
  \begin{equation}\label{eq:interference_unitary_rwting}
  \begin{split}
      \Gamma_{yy}(\nu) &=  e^{\bmmu(\nu)\frac{\alpha(\nu)}{2}} S_{0, x}(\nu)[1 + \Phi_{x}(\nu)\bmmu_{x}(\nu)]e^{-\bmmu(\nu)\frac{\alpha(\nu)}{2}} \\
       &= S_{0, x}(\nu) + \Phi_{x}(\nu)e^{\bmmu(\nu)\frac{\alpha(\nu)}{2}}\bmmu_x(\nu)e^{-\bmmu(\nu)\frac{\alpha(\nu)}{2}}.
  \end{split}
  \end{equation}
  Eqs. (\ref{eq:interference_unitary})--(\ref{eq:interference_unitary_rwting}) show that the unitary filter performs a \emph{geometric operation}: a 3D rotation of the spectral density $\Gamma_{xx}(\nu)$.
  Birefringence affects the output polarization axis $\bmmu_y(\nu)$, which is given by the rotation of the input polarization axis $\bmmu_x(\nu)$.
  Birefringence axis $\bmmu(\nu)$ and angle $\alpha(\nu)$ define this rotation. This geometrical operation can be visualized on the Poincar\'e sphere in Fig. \ref{fig:poincaresphere}.
  Eq. (\ref{eq:interference_unitary_rwting}) highlights that the total PSD and degree of polarization are rotation invariant: $S_{0,y}(\nu) = S_{0, x}(\nu)$ and $\Phi_y(\nu) = \Phi_x(\nu)$.
  The output polarization axis $\bmmu_y(\nu)$ is given by the rotation of angle $\alpha(\nu)$ of $\bmmu_x(\nu)$ around the axis $\bmmu(\nu)$.

  \textbf{Eigenpolarizations}. At a given $\nu$, unitary filters have two orthogonal eigenpolarizations. These are fully polarized spectral components $Z_\pm(\nu)$ with polarization axis is $\bmmu_{z_\pm}(\nu) = \pm \bmmu(\nu)$. As as result one gets
  \begin{equation}\label{eq:unitary_eigenpolarizations}
      e^{\bmmu(\nu)\frac{\alpha(\nu)}{2}}Z_{\pm}(\nu)e^{\bmj\phi(\nu)} = Z_{\pm}(\nu)e^{\bmj(\phi(\nu)\pm\alpha(\nu)/2)}.
  \end{equation}
  Eq. (\ref{eq:unitary_eigenpolarizations}) is another illustration of birefringence. It shows that unitary filters introduce a phase difference $\alpha(\nu)$ between the \emph{fast} eigenpolarization $Z_+(\nu)$ and \emph{slow} eigenpolarization $Z_-(\nu)$.

  Eigenpolarizations properties (\ref{eq:unitary_eigenpolarizations}) give a simple way to identify the parameters of the filter. The approach is analogous to what is done in experimental optics \cite{J.J.Gil2016}.
  Working with monochromatic signals of increasing frequency, one can adjust input polarization axis such that the output polarization axis are the same.
  It gives immediatly the birefringence axis $\bmmu(\nu)$. Measuring phase delays with respect to fast and slow eigenpolarizations then permits using (\ref{eq:unitary_eigenpolarizations}) to identify birefringence angle $\alpha(\nu)$ and phase $\phi(\nu)$.

  \subsection{Hermitian filters}
  \label{sub:hermitian_filters}
  A Hermitian filter performs a Hermitian transform at each frequency.
  This second type of filter acts on both power and polarization properties of the input signal.
  Three frequency-dependent quantities are necessary to define a Hermitian filter: the \emph{homogeneous gain} $K(\nu) \geq 0$ and two quantities related to \emph{diattenuation}: the \emph{polarizing power} $\eta(\nu)$ and the \emph{diattenuation axis} $\bmmu(\nu)$.
  When $\eta(\nu) = 0$, $K(\nu)$ has a classical interpretation as the gain of the filter.
  When $\eta(\nu) \neq 0$ the gain of the filter depends on the projection of the polarization axis $\bmmu_x(\nu)$ onto the diattenuation axis $\bmmu(\nu)$.
  In particular eigenpolarizations, which are spectral components with polarization axis $\pm \bmmu(\nu)$ correspond to maximum and minimum gain values.

  Proposition \ref{prop:hermitian_filter} gives the Hermitian filtering relation for bivariate signals.
  Relations between input and output spectral densities are presented.
  The use of (\ref{eq:Gamma_xx_explicit_rwting}) yields an explicit rewriting of
  $\Gamma_{yy}(\nu)$ in terms of input polarization properties.
  \begin{proposition}[Hermitian filter]\label{prop:hermitian_filter}
      Let $x$ be the input and $y$ be the output of the Hermitian filter, with respective QFTs $X$ and $Y$.
      The filtering relation is
      \begin{equation}\label{eq:hermitian_filtering}
        Y(\nu) = K(\nu)[X(\nu) - \eta(\nu) \bmmu(\nu) X(\nu)\bmj]
      \end{equation}
      with $K(-\nu) = K(\nu)$, $\eta(-\nu) = \eta(\nu)$ and $\bmmu(-\nu) = \conji{\bmmu(\nu)}$.
      Using (\ref{eq:Gamma_xx_explicit_rwting}), the spectral density of $y$ is then given by (dropping $\nu$ dependence for convenience)
      \begin{align}
          \mathcal{S}\left(\Gamma_{yy}\right) &=S_{0, x} K^2 \left[ 1 + \eta^2 +2\eta \Phi_x\ip{\bmmu, \bmmu_x} \right]\label{eq:hermitianFilterSOy} \\
          \mathcal{V}\left(\Gamma_{yy}\right)&= S_{0, x} K^2\left[ 2\eta\bmmu + \Phi_x[\bmmu_x - \eta^2\bmmu\bmmu_x\bmmu
          \right]\label{eq:hermitianFilterVectorial}
      \end{align}
      where $\ip{\bmmu_1, \bmmu_2} = \mathcal{S}(\bmmu_1\overline{\bmmu_2})$ is the usual inner product of $\bbR^3$.
  \end{proposition}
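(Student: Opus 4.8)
The plan is to handle the three assertions in turn, each reducing to Lemma~\ref{prop:hermitianTransform} together with the definition of the quaternion spectral density. The filtering relation~(\ref{eq:hermitian_filtering}) needs no new work: fixing $\nu$ and writing the vector relation $\ve{Y}=\mat{H}\ve{X}$, Lemma~\ref{prop:hermitianTransform} translates it verbatim into $Y=K[X-\eta\bmmu X\bmj]$, and restoring the $\nu$-dependence gives~(\ref{eq:hermitian_filtering}). For the symmetry conditions I would check that the stated parities force $Y$ to inherit the $\bmi$-Hermitian symmetry~(\ref{eq:ihermitian}), $Y(-\nu)=\involi{Y(\nu)}$, which guarantees that $y(t)$ is $\bbCi$-valued. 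Substituting $X(-\nu)=\involi{X(\nu)}$ into~(\ref{eq:hermitian_filtering}) and using that the $\bmi$-involution is an inner automorphism (conjugation by $\bmi$), hence multiplicative, together with $\involi{\bmj}=-\bmj$, one matches the two sides term by term; the scalar factors $K,\eta$ being even, the only constraint falls on the axis, yielding exactly $\bmmu(-\nu)=\conji{\bmmu(\nu)}$.

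The substantive part is the spectral density. Working with $\Gamma_{yy}=|Y|^2+\polarmodj{Y}$ (and invoking Appendix~\ref{app:spectralRep} to replace $X$ by its spectral increment in the random case), I would first exploit a clean structural separation: $|Y|^2=Y\overline{Y}\in\bbR$ contributes only to $\mathcal{S}(\Gamma_{yy})$, whereas $\polarmodj{Y}$ is always pure, since $\overline{Y\bmj\overline{Y}}=Y(-\bmj)\overline{Y}=-Y\bmj\overline{Y}$, hence contributes only to $\mathcal{V}(\Gamma_{yy})$. Thus the scalar and vector identities decouple. The common preliminary step is to compute $\overline{Y}$ using the anti-multiplicativity of quaternion conjugation with $\overline{\bmmu}=-\bmmu$ and $\overline{\bmj}=-\bmj$, giving $\overline{Y}=K[\overline{X}-\eta\bmj\overline{X}\bmmu]$.

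For the scalar part I would expand $Y\overline{Y}=K^2[X-\eta\bmmu X\bmj][\overline{X}-\eta\bmj\overline{X}\bmmu]$. The diagonal terms collapse via $\bmj^2=-1$ and $X\overline{X}=|X|^2$ to $(1+\eta^2)|X|^2$, while the two cross terms combine into $-\eta(V\bmmu+\bmmu V)$ with $V\defeq\polarmodj{X}$. The key identity here is that for pure quaternions the anticommutator is scalar, $V\bmmu+\bmmu V=-2\ip{V,\bmmu}$. For the vector part I would likewise expand $\polarmodj{Y}$, first simplifying $[X-\eta\bmmu X\bmj]\bmj=X\bmj+\eta\bmmu X$ and then multiplying out to reach $V+2\eta|X|^2\bmmu-\eta^2\bmmu V\bmmu$. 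In both cases the final move is to insert the decomposition~(\ref{eq:Gamma_xx_explicit_rwting}), namely $|X|^2=S_{0,x}$ and $V=\Phi_x S_{0,x}\bmmu_x$, which immediately yields~(\ref{eq:hermitianFilterSOy}) and~(\ref{eq:hermitianFilterVectorial}).

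The main obstacle is purely the noncommutative bookkeeping: because the factors sit on both sides of $X$, every cross term must be expanded without reordering, and the reduction hinges on applying $\bmj^2=-1$, $X\overline{X}=|X|^2\in\bbR$, and the pure-quaternion anticommutator identity at the right places. The only term that does not simplify further is the sandwich $\bmmu\bmmu_x\bmmu$ in the vector part; I would leave it as is, noting that it is (up to sign) the reflection of $\bmmu_x$ about $\bmmu$, which is exactly the geometric signature of diattenuation displayed in~(\ref{eq:hermitianFilterVectorial}).
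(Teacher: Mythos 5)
Your proposal is correct and follows the same route as the paper, whose proof is only a sketch: obtain (\ref{eq:hermitian_filtering}) from Lemma~\ref{prop:hermitianTransform}, pass to spectral increments via Appendix~\ref{app:spectralRep}, and plug the filtering relation into the spectral density definition (\ref{appeq:quaternionSpectralDensity}) together with (\ref{eq:Gamma_xx_explicit_rwting}). Your expansion (the decoupling of $|Y|^2$ and $\polarmodj{Y}$ into scalar and vector parts, the anticommutator identity for pure quaternions, and the verification of the $\bmi$-Hermitian symmetry constraints) simply supplies the details the authors leave implicit, and it checks out.
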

  \begin{proof}[Sketch of proof]
    Eq. (\ref{eq:hermitian_filtering}) is obtained directly from Lemma \ref{prop:hermitianTransform}.
    To obtain (\ref{eq:hermitianFilterSOy})-(\ref{eq:hermitianFilterVectorial}) use the correspondence described in Appendix \ref{app:spectralRep}.
    Plugging (\ref{eq:hermitian_filtering}) into the spectral density definition (\ref{appeq:quaternionSpectralDensity}) with the use of (\ref{eq:Gamma_xx_explicit_rwting}) yields (\ref{eq:interference_unitary}).
  \end{proof}
  Symmetry conditions in (\ref{eq:hermitian_filtering}) ensure that the $\bmi$-Hermitian symmetry (\ref{eq:ihermitian}) is satisfied for $Y(\nu)$ so that $y(t)$ is $\bbCi$-valued.
  In the sequel, we work at a fixed frequency $\nu$.
  Explicit dependence in $\nu$ is dropped to avoid notational clutter.

  \textbf{Gain}. The \emph{power gain} $G$ of the filter is defined by
  \begin{equation}
    G = \frac{\mathcal{S}\left(\Gamma_{yy}\right)}{\mathcal{S}\left(\Gamma_{xx}\right)} = \frac{S_{0,y}}{S_{0, x}}
  \end{equation}
  Using Eq. (\ref{eq:hermitianFilterSOy}) this gain becomes
  \begin{equation}\label{eq:gain_hermitian}
      G =  K^2\left[1 + \eta^2 +2\eta \:\Phi_x\ip{\bmmu, \bmmu_x} \right].
  \end{equation}
  When $\eta = 0$ the power gain reduces to its usual expression $G = K^2$.
  When $\eta \neq 0$, the gain depends on $K$ and $\eta$ but most importantly, on the alignment $\ip{\bmmu, \bmmu_x}$ between diattenuation and input polarization axes.

  \textbf{Eigenpolarizations}. Hermitian filters have two orthogonal eigenpolarizations.
  These are fully polarized spectral components $Z_\pm$ with polarization axis
  $\bmmu_{z_\pm} = \pm \bmmu$. From (\ref{eq:hermitian_filtering}) one has
  \begin{equation}\label{eq:hermitian_eigenpolarizations}
     K[Z_\pm - \eta \bmmu Z_\pm \bmj] = K[1\pm\eta] Z_\pm.
  \end{equation}
  Eq. (\ref{eq:hermitian_eigenpolarizations}) characterizes \emph{diattenuation} \cite{J.J.Gil2016, J.J.Gil2007}.
  Orthogonal eigenpolarizations have different gains; the \emph{polarizing power} $\eta$ controls the gap between respective gain values.

  As with the unitary filter, eigenpolarization properties (\ref{eq:hermitian_eigenpolarizations}) give a natural way to identify filter parameters.
  Note first that eigenpolarizations correspond directly to maximum and minimum values of the gain $G$ (\ref{eq:gain_hermitian}).
  Thus, finding the maximum and minimum value of the gain by changing the input polarization allows to identify directly parameters $K$, $\eta$ and $\bmmu$.
  Let $G_{\text{max}}$ and $G_{\text{min}}$ denote the maximal/minimal gain values, one has
  \begin{equation}
    \frac{2\eta}{1+\eta^2} = \frac{G_{\text{max}} - G_{\text{min}}}{G_{\text{max}}+ G_{\text{min}}} \text{ and } K^2 = \frac{G_{\text{max}} - G_{\text{min}}}{4\eta}.
  \end{equation}
  Repeating the operation for a wide range of frequencies completes the characterization procedure.

  \textbf{Identification using unpolarized WGN}.
  The spectral density of the response of the Hermitian filter to an unpolarized white Gaussian noise input provides a simple and practical way to identify its parameters.
  The input unpolarized WGN noise $w(t)$ has constant spectral density $\Gamma_{ww}(\nu) = \sigma_0^2 \geq 0$, with $\sigma_0^2$ the noise variance.
  It is unpolarized for every frequency since $\Phi_w(\nu) = 0$.
  Then the output $y(t)$ has spectral density
  \begin{equation}\label{eq:unpolarized_input_hermitian}
    \Gamma_{yy}(\nu) = \sigma_0^2K^2(\nu)[1 + \eta^2(\nu)  + 2\eta(\nu) \bmmu(\nu)].
  \end{equation}
  Filter parameters $\eta(\nu)$ and $\bmmu(\nu)$ completely define the output polarization state.
  Identifying (\ref{eq:Gamma_xx_explicit_rwting}) for $\Gamma_{yy}$ with (\ref{eq:unpolarized_input_hermitian}) yields the filter parameters:
  \begin{equation}\label{eq:filterParametersFromUnpolarized}
  \begin{cases}
      \eta(\nu) &= \displaystyle\frac{1-\sqrt{1-\Phi_y^2(\nu)}}{\Phi_y(\nu)} \quad (\Phi_y(\nu)\neq 0)\\
      K^2(\nu) &= \displaystyle \frac{S_{0, y}(\nu)}{\sigma_0^2(1+\eta^2(\nu))}\\
      \bmmu(\nu) &= \displaystyle\bmmu_y(\nu)
  \end{cases}
  \end{equation}
  and $\eta(\nu) = 0$ when $\Phi_y(\nu) = 0$.

  This result is fundamental.
  In the bivariate case, unpolarized white noise plays the role of white noise in the univariate case.
  It permits a direct identification of the parameters of the Hermitian filter.
  Moreover any bivariate signal with arbitrary spectral density $\Gamma_{yy}$ can be obtained as a Hermitian filtered version of unpolarized white noise.
  Section \ref{sub:spectral_synthesis} exploits the latter property to simulate stationary bivariate signals via spectral synthesis.

  \textbf{Examples}.
  Hermitian filters are characterized by non-trivial interactions between input polarization properties and filter parameters.
  Two particular cases illustrate how far the proposed approach is rich and interpretable.
  Frequency dependence is omitted in what follows.

  \emph{Null polarizing power $\eta = 0$}. One has $Y = KX$ and $\Gamma_{yy} = K^2 \Gamma_{xx}$.
  The output is a purely amplified/attenuated version of the input signal.
  Polarization properties are not modified.

  \emph{Maximal polarizing power $\eta = 1$}. The Hermitian filter is called a \emph{polarizer} since the output polarization properties do not depend on the input polarization properties.
  Geometrically, starting from (\ref{eq:hermitianFilterVectorial}) the term $\bmmu_x - \bmmu\bmmu_x\bmmu$ corresponds to the projection of $\bmmu_x$ onto $\bmmu$, up to a factor 2: the filter performs a projection onto the diattenuation axis $\bmmu$.
  The output polarization axis is $\bmmu_y = \bmmu$; the output is totally polarized $\Phi_y = 1$.
  The gain $G$ quantifies how `close' $\bmmu_x$ is to $\bmmu$:
  \begin{equation}
    G = 2 S_{0, x} K^2[1 + \Phi_x \ip{\bmmu_x, \bmmu}]
  \end{equation}
  In particular, for eigenpolarizations $Z_\pm$:
  \begin{equation}
    Y_+ = 2 K Z_+ \text{ and } Y_- = 0
  \end{equation}
  meaning that when the input polarization axis is $\bmmu_x = -\bmmu$ (orthogonal polarization) and totally polarized, the output cancels out.
  It illustrates how the alignment between input polarization and diattenuation axes affects the gain of the filter.

\section{Applications}
\label{sec:applications}
  \begin{figure*}
    \includegraphics[width=\textwidth]{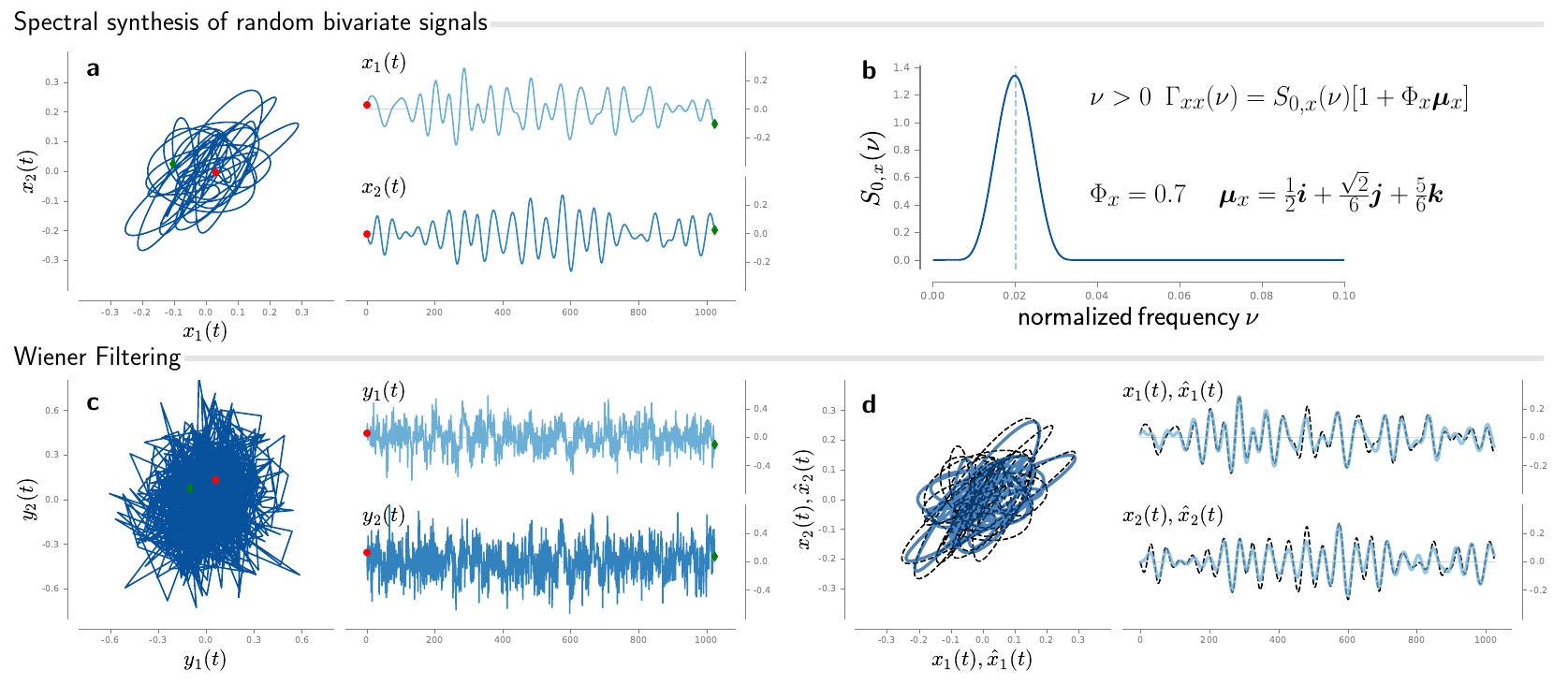}
    \caption{Numerical simulations illustrating the novel tools introduced in this paper.
    \textbf{\textsf{a}}\: Partially elliptically polarized narrow-band signal obtained using the spectral synthesis method of Sec. \ref{sub:spectral_synthesis}. This reference signal is used in all subsequent simulations.
    \textbf{\textsf{b}}\: Power spectral distribution and parameters used in \textbf{\textsf{a}}.
    \textbf{\textsf{c}} Reference signal in partially ($\Phi_w = 0.4)$ vertically polarized white noise with $\mathrm{SNR} = -5$ dB.
    \textbf{\textsf{d}}
    Output of the Wiener denoising filter described in Sec. \ref{sub:wiener_denoising}. Dashed lines indicate the original signal of Fig. 2a.}
    \label{fig:experimental}
  \end{figure*}
  \subsection{Spectral synthesis}
  \label{sub:spectral_synthesis}

  We propose a new simulation method for Gaussian stationary random bivariate signals based on the filtering of a bivariate white Gaussian noise.
  Eq. (\ref{eq:unpolarized_input_hermitian}) shows that any bivariate signal with arbitrary spectral density can be obtained by Hermitian filtering of unpolarized white noise.
  This result allows to generalize a well-known approximate simulation algorithm \cite{Percival1992} to the case of bivariate random signals.

  Let $\Gamma_0(\nu) = S_0(\nu)[1 + \Phi_0(\nu)\bmmu_0(\nu)]$ denote the spectral density of the \emph{target} signal to sample from.
  Let $w(t)$ be an unpolarized white noise: its spectral density is constant $\Gamma_{ww}(\nu) = \sigma_0^2 \in \bbR^+$.
  Let $x(t)$ be the result of Hermitian filtering of $w(t)$.
  Adapting notations from (\ref{eq:unpolarized_input_hermitian}) one gets
  \begin{equation}\label{eq:sub:spectral_synth_output_unidentified}
      \Gamma_{xx}(\nu) = \sigma_0^2K^2(\nu)[1 + \eta^2(\nu)]\left[1 + \frac{2\eta(\nu)}{1 + \eta^2(\nu)}\bmmu(\nu)\right].
  \end{equation}
  Remark that (\ref{eq:sub:spectral_synth_output_unidentified}) is of the form (\ref{eq:Gamma_xx_explicit_rwting}).
  Identifying filter parameters to match the target spectral density $\Gamma_0(\nu)$
  yields the same expressions as in (\ref{eq:filterParametersFromUnpolarized}).

  In practice one wants to generate a discrete, $N$-length realization of the signal $x(t)$.
  One starts by generating an i.i.d unpolarized white noise sequence of length $M \geq N$ (see Appendix \ref{app:whitenoise}).
  Filtering this sequence thanks to discrete implementation of (\ref{eq:hermitian_filtering}) and keeping the first $N$ samples gives a discretized realization of the signal $x(t)$.
  As in the univariate setting \cite{Percival1992}, the quality of the simulation is increasing with $M$.

  Figure \ref{fig:experimental}a depicts a realization of a narrow-band stationary random bivariate signal with constant polarization properties.
  The simulation is of length $N = 1024$ and was obtained using a $M = 10N$ length unpolarized white noise sequence.
  The signal is partially polarized $\Phi_x = 0.7$ and exhibits elliptical polarization axis. The power is distributed in a Gaussian-shaped fashion around normalized frequency $\nu_0 = 0.02$, see Figure \ref{fig:experimental}b for details.
  Note that the instantaneous polarization state evolves with time.
  This is a feature of partial polarization for quasi-monochromatic signals with constant polarization axis.

  \subsection{Wiener denoising}
  \label{sub:wiener_denoising}

  Wiener filtering is an ubiquitous tool in signal processing.
  We show that the Wiener filter for bivariate signals has a convenient quaternion representation.
  It allows meaningful physical interpretations and a direct parametrization in terms of polarization parameters.
  We restrict our analysis to the denoising case.
  Our goal is to estimate a signal of interest $x(t)$ from which we have \emph{measurements} $y(t)$ of the form
  \begin{equation}\label{eq:denoising_quaternion_problem}
    y(t) = x(t) + w(t)
  \end{equation}
  where $w(t)$ is bivariate noise, independent from $x(t)$.
  All signals are assumed to be zero-mean, second-order stationary with known spectral densities.
  The Wiener filter solves the minimum-mean-square-error (MMSE) problem
  \begin{equation}\label{eq:wienerProblemquaternion}
    \mathrm{min}\: \Expe{\vert \hat{x}(t) - x(t)\vert^2}
  \end{equation}
  where $\hat{x}(t)$ is obtained by linear filtering of $y(t)$.
  Intuitively when searching for a polarized deterministic signal $x(t)$ in unpolarized noise $w(t)$, the Wiener filter should behave like a \emph{polarizer}.
  It means that every spectral component of $y$ is projected along the polarization axis $\bmmu_x(\nu)$.
  Fortunately, this intuition is proven right by the generic expression of the Wiener filter.

  Frequency dependence is omitted for convenience.
  The Wiener denoising filter is a Hermitian filter (see Appendix \ref{app:wienerFilter} for calculations):
  \begin{equation}\label{eq:wienerFilterquaternionGeneric}
    \hat{X} = \frac{S_{0, x}\left(1 - \Phi_x\Phi_y\ip{\bmmu_x,\bmmu_y} \right)}{S_{0, y}[1-\Phi_y^2]}\left[ Y -  \frac{\Phi_x\bmmu_x-\Phi_y\bmmu_y }{1 - \Phi_x\Phi_y\ip{\bmmu_x,\bmmu_y}}Y\bmj\right].
  \end{equation}

  Quantities $K(\nu), \bmmu(\nu), \eta(\nu)$ of Proposition \ref{prop:hermitian_filter} can be readily identified from (\ref{eq:wienerFilterquaternionGeneric}).
  Note the use of the explicit form (\ref{eq:Gamma_xx_explicit_rwting}) of  $\Gamma_{yy}(\nu) = \Gamma_{xx}(\nu) + \Gamma_{ww}(\nu)$ to simplify notations.

  In many situations the noise $w(t)$ can be assumed unpolarized for every frequency.
  Then $\Gamma_{ww}(\nu) = \sigma^2(\nu) \in \bbR^+$ and
  \begin{equation}
    \Gamma_{yy}(\nu) = \underbrace{S_{0, x}(\nu) +\sigma^2(\nu)}_{S_{0, y}(\nu)} + \underbrace{S_{0, x}(\nu)\Phi_x(\nu)\bmmu_x(\nu)}_{S_{0, y}(\nu)\Phi_y(\nu)\bmmu_y(\nu)}
  \end{equation}
  The polarization axis is not affected by the noise: $\bmmu_y(\nu) = \bmmu_x(\nu)$ for all $\nu$.
  We introduce $\alpha = S_{0, x}/\sigma^2$, the frequency-domain signal-to-noise ratio (SNR).
  The degree of polarization is $\Phi_y(\nu) = \alpha(\nu)\Phi_x(\nu)/(1+\alpha(\nu))$.
  The Wiener filter (\ref{eq:wienerFilterquaternionGeneric}) then simplifies to
  \begin{equation}\label{eq:wienerFilterquaternionSimplecase}
    \hat{X} = \frac{\alpha+\alpha^2[1-\Phi_x^2]}{1+2\alpha + \alpha^2[1-\Phi_x^2]}
    \left[Y -\frac{\Phi_x}{1+\alpha[1-\Phi_x^2]}\bmmu_x Y\bmj\right].
  \end{equation}
  The diattenuation axis of the filter is the polarization axis of the target $\bmmu_x$.
  Homogeneous gain and polarizing power depend on the target degree of polarization $\Phi_x$ and frequency-domain SNR $\alpha$.
  In particular, when $x$ is deterministic (hence totally polarized at all frequencies) then the Wiener filter reduces to
  \begin{equation}\label{eq:wienerFilterquaternionPolarizer}
    \hat{X}(\nu) = \frac{S_{0, x}(\nu)}{2 S_{0, x}(\nu) + \sigma^2(\nu)}\left[Y(\nu) -\bmmu_x(\nu)Y(\nu)\bmj\right].
  \end{equation}
  Eq. (\ref{eq:wienerFilterquaternionPolarizer}) defines a polarizer and validates our initial intuition. Each spectral component of $y$ is projected along the polarization axis $\bmmu_x(\nu)$.

  The MMSE is $\epsilon_\text{opt} = \Expe{\vert \hat{x}(t)-x(t)\vert^2}$ with $\hat{x}(t)$ given by (\ref{eq:wienerFilterquaternionGeneric}).
  The MMSE can be rewritten as a frequency domain integral (see Appendix \ref{app:wienerFilter})
  \begin{equation}\label{eq:errorWienerGeneric}
    \epsilon_{\text{opt}} = \int_{-\infty}^\infty \epsilon_{\text{opt}}(\nu)\mathrm{d}\nu
  \end{equation}
  where $\epsilon_{\text{opt}}(\nu)$ is:
  \begin{align}
    \epsilon_{\text{opt}}(\nu) &= S_{0, x}\left(1 - \frac{S_{0, x}}{S_{0, y}}\frac{1+\Phi_x^2-2\Phi_x\Phi_y\ip{\bmmu_x, \bmmu_y}}{1- \Phi_y^2}\right)\label{eq:optimalErrorSigObs}\\
    &\hspace{-2em}=  S_{0, x}\frac{1 - \Phi_w^2 + \alpha[1-\Phi_x^2]}{1 - \Phi_w^2 + \alpha^2[1-\Phi_x^2] + 2\alpha[1-\Phi_x\Phi_w\ip{\bmmu_x, \bmmu_w}]}.\label{eq:optimalErrorSigNoise}
  \end{align}
  Eqs (\ref{eq:optimalErrorSigObs})-(\ref{eq:optimalErrorSigNoise}) illustrate the dependence of the optimal error in terms of polarization properties of the signal $x$, observation $y$ or noise $w$.
  Fixing all parameters excepted $\ip{\bmmu_x, \bmmu_w}$ in (\ref{eq:optimalErrorSigNoise}), the optimal error is minimum when signal and noise exhibit orthogonal polarizations, \ie when their polarization axes are anti-aligned $\ip{\bmmu_x, \bmmu_w} = -1$.
  The error is maximum when signal and noise have same polarization $\ip{\bmmu_x, \bmmu_w} = 1$.
  Given $\alpha$, asymmetry between minimum and maximum values is accentuated for strongly polarized signal and noise ($\Phi_x, \Phi_w \simeq 1$).
  For $\alpha \gg 1$ (\ref{eq:optimalErrorSigNoise}) becomes $\epsilon_{\text{opt}}(\nu) \simeq S_{0, x}(\nu)/\alpha(\nu)$,
  while for $\alpha << 1$ one gets $\epsilon_{\text{opt}}(\nu) \simeq S_{0, x}(\nu)$, as expected.

  We conclude by a numerical example of Wiener filter denoising.
  The signal $x(t)$ is taken as the synthetized signal of Fig. \ref{fig:experimental}a.
  It is a partially elliptically polarized narrow-band signal.
  Spectral density parameters are given in Fig \ref{fig:experimental}b.
  Measurements $y(t)$ are obtained using (\ref{eq:denoising_quaternion_problem}) with $w(t)$ a partially vertically polarized white Gaussian noise, see Appendix \ref{app:whitenoise} for details.
  Its spectral density is $\Gamma_{ww}(\nu) = \sigma^2(1 - 0.4\bmj)$.
  Noise variance is adjusted so that $\text{SNR} = -5$ dB.

  Figure \ref{fig:experimental}c depicts the measurements $y(t)$.
  Clearly, noise level is larger on the vertical axis on account of the partial vertical polarization of $w(t)$.
  Figure \ref{fig:experimental}d shows the output of the Wiener filter.
  The reconstruction SNR is $10\log_{10}(\Vert x(t)\Vert_2^2/\Vert \hat{x}(t)-x(t)\Vert_2^2) = 9.92 $ dB, where $\Vert \cdot\Vert_2$ is the standard 2-norm.
  It illustrates the good performances in recovering the original signal $x(t)$.

  \subsection{Some decompositions of stationary bivariate signals}
  \label{sub:decomposition_bivariate_signals}
  \begin{table*}
    \caption{Different decompositions obtained by changing the homogeneous gain $K(\nu)$. }\label{table:decompositionBivSig}
    \centering
    \resizebox{.9\textwidth}{!}{
      \begin{tabularx}{1.1\textwidth}{ccccl}
      \toprule
      &$K(\nu)$ & $\Gamma_{x_a, x_a}(\nu)$ & $\Gamma_{x_b, x_b}(\nu)$ & {\textbf{correlation}}\\
      \cmidrule(l){2-4}\cmidrule(l){5-5}
      \textsf{\textbf{(i)}}&$\displaystyle \sqrt{\dfrac{\Phi_x(\nu)}{2(1+\Phi_x(\nu))}}$ & $S_{0, x}(\nu)\Phi_x(\nu)[1+\bmmu_x(\nu)]$ &
      \begin{tabular}[t]{ll}
        & $\kappa(\nu)S_{0,x}(\nu)\left[1 - \Phi(\nu)\bmmu_x(\nu)\right]$\vspace{0.5em}\\

        with & $\kappa(\nu) =\left(1+\Phi_x(\nu)-2(\Phi_x(\nu) + 1)K(\nu)\right)$\\[.5em]
        &$\displaystyle\Phi(\nu)  = \frac{1-2\Phi_x(\nu)+2[\Phi_x(\nu) + 1]K(\nu)}{1+\Phi_x(\nu)-2[\Phi_x(\nu) + 1]K(\nu)}$
      \end{tabular}
      & correlated \\[.5em]
      \midrule \\
      \textsf{\textbf{(ii)}}&$\displaystyle 1 - \frac{\Phi_x(\nu)}{\Phi_x(\nu) + 1-\sqrt{1-\Phi_x(\nu)^2}}$
      & $2S_{0,x}(\nu)K^2(\nu)[1+\Phi_x(\nu)][1+\bmmu_x(\nu)]$ & $S_{0,x}(\nu)[1-\Phi_x(\nu)]$ &correlated \\[2em]
      \midrule \\
       \textsf{\textbf{(iii)}} &$\displaystyle \frac{1}{2}$& $\frac{S_{0, x}(\nu)}{2}[1+\Phi_x(\nu)][1+\bmmu_x(\nu)]$&
      $\frac{S_{0, x}(\nu)}{2}[1-\Phi_x(\nu)][1-\bmmu_x(\nu)]$& uncorrelated \\[1em]
      \bottomrule
    \end{tabularx}
    }
  \end{table*}
  It is known \cite{flamant2017spectral,born2000principles} that the spectral density of a bivariate signal $x(t)$ can be uniquely decomposed as the sum of unpolarized and totally polarized spectral densities:
  \begin{align}
      \Gamma_{xx}(\nu) &= [1-\Phi_x(\nu)]S_{0, x}(\nu) + \Phi_x(\nu)S_{0, x}(\nu)[1 + \bmmu_x(\nu)]\nonumber\\
      &= \Gamma_{xx}^U(\nu) + \Gamma_{xx}^P(\nu),\label{eq:UPDecompositionSpectralDensities}
  \end{align}
  where superscripts $U$ and $P$ stand respectively for \emph{unpolarized} and \emph{polarized} parts.
  The decomposition (\ref{eq:UPDecompositionSpectralDensities}) motivates the search for decompositions of the bivariate signal $x(t)$ into two parts $x_a(t)$ and $x_b(t)$ such that
  \begin{equation}\label{eq:decompositionAB}
    x(t) = x_a(t) + x_b(t).
  \end{equation}
  Comparing (\ref{eq:decompositionAB}) with (\ref{eq:UPDecompositionSpectralDensities}), we search a linear filter such that $x_a(t)$ is fully polarized along $\bmmu_x(\nu)$ for every frequency.
  Additionaly the two parts should satisfy: (i) $x_a(t)$ has spectral density $\Gamma_{xx}^P(\nu)$;
  (ii) $x_b(t)$ is unpolarized for every frequency, with spectral density $\Gamma_{xx}^U(\nu)$;
  (iii) $x_a(t)$ and $x_b(t)$ are uncorrelated.
  Unfortunately no such linear filter exists.
  Each requirement corresponds to a distinct filter: only one requirement at a time can be met.

  Since unitary filters do not affect the degree of polarization or are not able to decorrelate two signals, it is necessary to use a Hermitian filter.
  Moreover since we search for $x_a(t)$ fully polarized along $\bmmu_x(\nu)$, one has to use a \emph{polarizer} along the polarization axis of $x(t)$:
  \begin{align}
    X_a(\nu) &= K(\nu)\left(X(\nu)-\bmmu_x(\nu)X(\nu)\bmj\right),\label{eq:filteringDecomposition1}\\
    X_b(\nu) &= X(\nu) - X_a(\nu)\nonumber \\ &\hspace{-1em}=\left(1-K(\nu)\right)\left(X(\nu)+\frac{K(\nu)}{1-K(\nu)}\bmmu_x(\nu)X(\nu)\bmj\right).\label{eq:filteringDecomposition2}
  \end{align}
  The second component $x_b(t)$ is such that (\ref{eq:decompositionAB}) holds.
  Note that in (\ref{eq:filteringDecomposition1})-(\ref{eq:filteringDecomposition2})
  the gain $K(\nu)$ is not fixed.
  Requirements (i), (ii) or (iii) correspond to distinct values of this gain.
  Stated differently, $K(\nu)$ rules the nature of the decomposition (\ref{eq:decompositionAB}).

  Table \ref{table:decompositionBivSig} summarizes expressions of the gain and spectral densities of $x_a(t)$ and $x_b(t)$ for requirements (i), (ii) and (iii).
  In addition correlation properties of the two components are given.
  To meet (i) the gain $K(\nu)$ is adjusted thanks to (\ref{eq:hermitianFilterSOy}) such that $\Gamma_{x_a, x_a}(\nu) = \Gamma_{xx}^P(\nu)$.
  However $x_b(t)$ is partially polarized and components are correlated.
  For (ii) starting from (\ref{eq:filteringDecomposition2}) and using (\ref{eq:hermitianFilterVectorial}) with $\bmmu(\nu) = -\bmmu_x(\nu)$ one computes the vector part of $\Gamma_{x_b, x_b}(\nu)$.
  Then the gain $K(\nu)$ is obtained by imposing $\Phi_b(\nu) = 0$ for every $\nu$.
  Fortunately the corresponding expression for $K(\nu)$ yields $\Gamma_{x_b, x_b}(\nu) = \Gamma_{xx}^U(\nu)$.
  The first component $x_a(t)$ is fully polarized like $x(t)$, but has weaker intensity than that of $\Gamma_{xx}^P(\nu)$. Components are also correlated.
  Finally (iii) is fulfilled by enforcing decorrelation between $x_a(t)$ and $x_b(t)$.
  See Appendix \ref{app:spectralRep} for technical details.
  Importantly $x_a(t)$ and $x_b(t)$ are both fully polarized with orthogonal polarization axes. Respective intensities are controlled by the degree of polarization $\Phi_x(\nu)$.
  Fig. \ref{fig:decompositions} illustrate decompositions (ii) and (iii) on the synthetized signal of Fig. \ref{fig:experimental}a. Decomposition (i) is not presented as it is similar to (iii), excepted that $x_b(t)$ is only (strongly) partially polarized.

  Taking another polarization axis in (\ref{eq:filteringDecomposition1})-(\ref{eq:filteringDecomposition2}) will not enable satisfying requirements (i)-(ii)-(iii).
  Indeed the filter corresponding to (ii) and defined in Table  \ref{table:decompositionBivSig} is the unique \emph{depolarizer} of $x(t)$, \ie the only filter that outputs an unpolarized signal from a partially polarized input ($\Phi_x < 1$).
  Moreover the unique linear filter producing decorrelated signals for $x_a(t)$ and $x_b(t)$ is the one defined by (iii) in Table \ref{table:decompositionBivSig}.

  This discussion answers an important and natural question.
  Since the decomposition (\ref{eq:UPDecompositionSpectralDensities}) holds, is it possible to decompose by linear filtering any bivariate signal into uncorrelated unpolarized and polarized components?
  Unfortunately the answer is negative.
  However, this hypothetical decomposition can still be used as a synthesis tool, as already shown \cite{flamant2017spectral}.
  Moreover in practical situations where such a decomposition may be needed, one can choose the appropriate filter according to the desired requirement (i), (ii) or (iii).

  \begin{figure*}
    \centering
    \includegraphics[width=\textwidth]{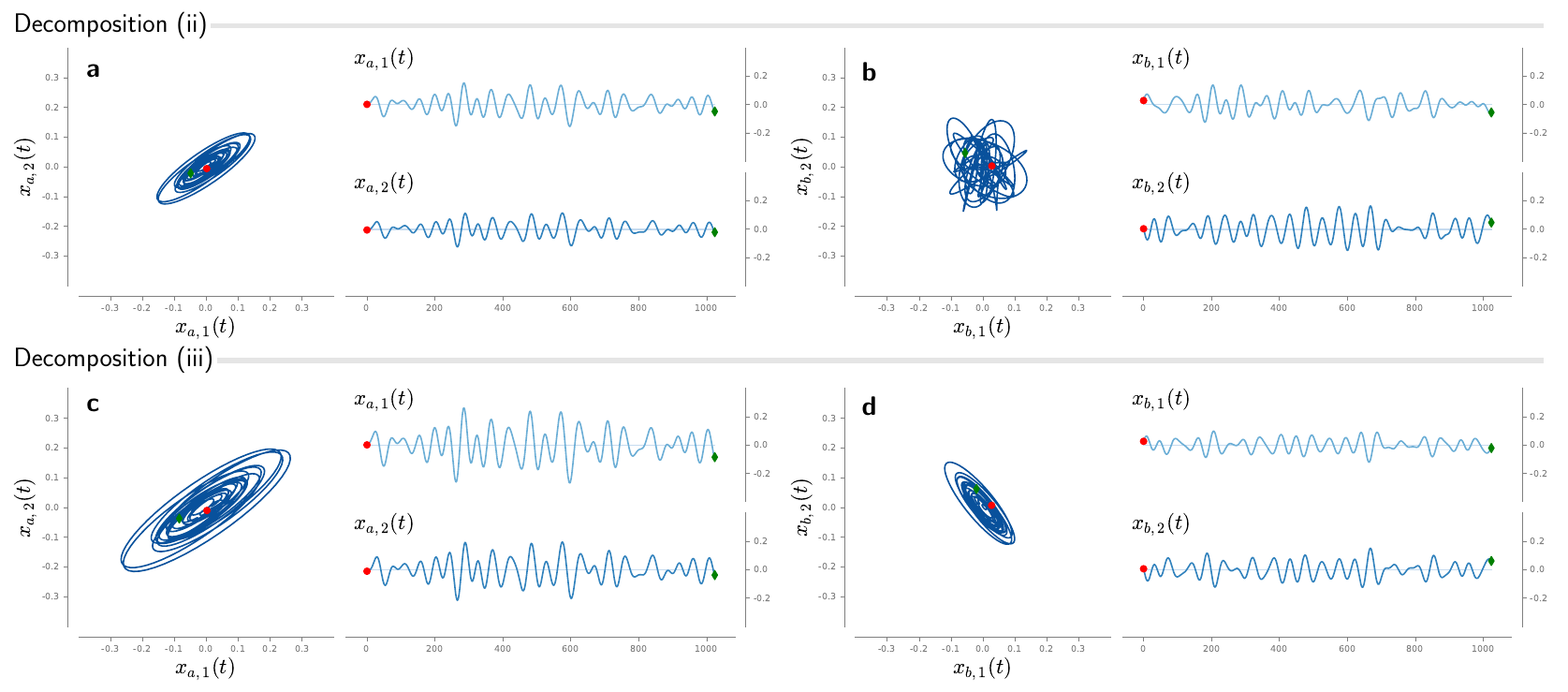}
    \caption{
    Decompositions (ii) and (iii) of the bivariate signal of Fig. \ref{fig:experimental}a.
    See Table \ref{table:decompositionBivSig} for expressions.
    \textbf{\textsf{a}}\: polarized part and
    \textbf{\textsf{b}}\: unpolarized part of decomposition (ii).
    Components are correlated.
    \textbf{\textsf{c}} and
    \textbf{\textsf{d}}: uncorrelated, orthogonal polarized parts of the original signal obtained thanks to decomposition (iii).}
    \label{fig:decompositions}
  \end{figure*}

\section{Conclusion}
\label{sec:conclusion}

This paper provides a complete and powerful framework for linear time-invariant filtering of bivariate signals.
The proposed framework yields a direct description of filtering in terms of physical quantities borrowed from polarization optics.
Our formalism reveals the specifity of bivariate signals and is crucial to the physical understanding of even basic operations such as linear filtering.
The natural expression of each filter directly in terms of eigenproperties and relevant physical parameters simplifies modeling, design, calculations and interpretations.
By studying in detail the two types of filters called unitary and Hermitian filters, we have also been able to give strong physical interpretations in terms of \emph{birefringence} or \emph{diattenuation} effects.

We have emphasized the relevance of our work on three fundamental applications of signal processing.
A spectral synthesis method to simulate any Gaussian stationary random bivariate signal with desired spectral and polarization properties has been presented.
It has been shown that the Wiener denoising problem can be efficiently designed in the quaternion domain, leading to new interpretations for the bivariate case.
Original decompositions of bivariate signals into two parts with specific properties have been studied.
Our approach paves the way to further developments in estimation, simulation and modelling of bivariate signals.
The approach is numerically efficient and relies on the use of FFT.
An open-source implementation of the presented framework will be soon available in the Python companion package
\texttt{BiSPy}\footnote{Documentation available at \texttt{\url{https://bispy.readthedocs.io/}}}.

\appendices

\section{Linear algebra and quaternion equivalence} 
\label{sec:app:linear_algebra_and_quaternion_equivalence}

\subsection{Matrix-vector and quaternion operations} 
\label{sub:general_matrix_quaternion_equivalence}
Eq. (\ref{eq:spectralVectorQuaternionRep}) shows that quaternions can be represented as complex $\bbCj$-vectors.
Let $\ve{X} = [X_1, X_2]^T$ and $\ve{Y} = [Y_1, Y_2]^T$ complex $\bbCj$-vectors corresponding to quaternions $X$ and $Y$.
Let $\mat{M}$ denote an arbitrary complex 2-by-2 matrix.
The matrix-vector relation $\ve{Y} = \mat{M}\ve{X}$ describes an arbitrary linear transform of $\bbCj^2$.

To obtain the corresponding relation between quaternions $Y$ and $X$, write explicitly the matrix-vector relation
\begin{equation}
    \begin{pmatrix}
        Y_1\\
        Y_2
    \end{pmatrix}
    =
    \begin{pmatrix}
        a & b\\
        c & d
    \end{pmatrix}
    \begin{pmatrix}
        X_1\\
        X_2
    \end{pmatrix}
    =
    \begin{pmatrix}
        a X_1 + b X_2\\
        c X_1 + d X_2
    \end{pmatrix}\label{eq:app_matrix_vector}
\end{equation}
where $a, b, c, d \in \bbCj$. Using (\ref{eq:spectralVectorQuaternionRep}) and
that for any $q = q_1 + \bmi q_2 \in \bbH$, $q_1, q_2 \in \bbCj$ one has $q_1 = (q+\involj{q})/2$ and $\bmi q_2 = (q-\involj{q})/2$:
\begin{align}
    Y &= Y_1 + \bmi Y_2  = a X_1 + b X_2 + \bmi\left(c X_1 + d X_2\right)\nonumber\\
      &= \frac{1}{2}\left(a - b\bmi + \bmi c -\bmi d \bmi\right)X\nonumber\\
      &-\frac{1}{2}\left(a + b\bmi + \bmi c + \bmi d \bmi\right)\bmj X\bmj \label{eq:app_general_filter}.
\end{align}
Eq. (\ref{eq:app_general_filter}) is the quaternion domain representation of a generic linear transform of vectors of $\bbCj^2$.

\subsection{Unitary transforms} 
\label{sub:app:unitary_transforms}

Let $\mat{U} \in \Utwo \subset \matrixspace{\bbCj}{2}$,  \ie such that $\mat{U}\mat{U}^* = \mat{U}^*\mat{U} = \mat{I}_2$.
Remark that $\mat{U} = \mat{\tilde{U}}\det(\mat{U})$ where $\mat{\tilde{U}} \in \SUtwo$ and $\det \mat{U} = \exp(\bmj\phi) \in \bbCj$.

Using notations from (\ref{eq:app_matrix_vector}), the matrix $\mat{\tilde{U}}$ is characterized by $d = \overline{a}$, $c = - \overline{b}$ and $\vert a\vert^2 + \vert b\vert^2 = 1$. Thus (\ref{eq:app_general_filter}) simplifies as
\begin{equation}\label{eq:app_su2filter}
    Y = (a - b\bmi) X = \exp(\bmmu \alpha) X.
\end{equation}
Since $\vert a\vert^2 + \vert b\vert^2 = 1$, $a-b\bmi$ is a unit quaternion which can be reparameterized in polar form by its axis $\bmmu$ and angle $\alpha$ such that
\begin{align}
    \bmmu &= \frac{-\bmi \mathrm{Re}\:b + \bmj \mathrm{Im}_{\bmj} a + \bmk \mathrm{Im}_{\bmj}b}{\vert -\bmi \mathrm{Re}\:b + \bmj \mathrm{Im}_{\bmj} a + \bmk \mathrm{Im}_{\bmj}b\vert },\\
    \alpha &= \arccos \mathrm{Re}\:a
\end{align}
Back to $\mat{U} \in \Utwo$, remark that
  \begin{equation}
      \ve{Y} = \mat{U}\ve{X} =  \mat{\tilde{U}}\begin{bmatrix}X_1 e^{\bmj\phi}\\X_2 e^{\bmj\phi}\end{bmatrix},
  \end{equation}
so that replacing $X$ by the quaternion $Xe^{\bmj\phi}$ in (\ref{eq:app_su2filter}) yields,
\begin{equation}
      \text{For } \mat{U} \in \Utwo, \: \ve{Y} = \mat{U}\ve{X} \Longleftrightarrow Y = e^{\bmmu \theta}Xe^{\bmj\phi}.
\end{equation}


\subsection{Hermitian transforms} 
\label{sub:app:hermitian_transforms}

Let $\mat{H}$ be Hermitian, \ie such that $\mat{H}^\dagger = \mat{H}$.
Using notations from (\ref{eq:app_matrix_vector}) one has $a, d \in \bbR$ and $c = -\overline{b} \in \bbCj$. Positive semidefiniteness is given by Sylvester Criterion: $a \geq 0 \quad ad - \vert b\vert^2 \geq 0$,
which also implies that $d \geq 0$.
Eq. (\ref{eq:app_general_filter}) becomes
\begin{equation}
    Y = \frac{1}{2}\left(a + d\right)X - \frac{1}{2}\left( 2b\bmk + (a-d)\bmj\right)X\bmj
\end{equation}
which can be reparameterized such as
\begin{align}
    K &= \frac{a+d}{2} \in \bbR^+\\
    \bmmu &= \frac{(a-d)\bmj + 2b\bmk}{\left[(a-d)^2 + 4\vert b\vert^2\right]^{1/2}}, \: \bmmu^2 = -1\\
    \eta & = \frac{\left[(a-d)^2 + 4\vert b\vert^2\right]^{1/2}}{a + d} \in [0, 1]
\end{align}
Respective domains of $K, \bmmu, \eta$ ensure that the change of variable defines a valid one-to-one mapping. Finally, the input-output relation reads
\begin{equation}
    Y = K\left(X - \eta \bmmu X \bmj \right).
\end{equation}
Parameters $K$ and $\eta$ can be expressed in terms of eigenvalues $\lambda_1, \lambda_2$ ($\lambda_1 \geq \lambda_2 \geq 0)$ of the matrix $\mat{M}$:
\begin{equation}
    K = \frac{\lambda_1 + \lambda_2}{2} \quad \eta = \frac{\lambda_1-\lambda_2}{\lambda_1 + \lambda_2}.
\end{equation}

\section{Wiener filter derivation}
\label{app:wienerFilter}

We keep notations from Section \ref{sub:wiener_denoising}.
Let $\ve{y}(t)$, $\ve{\hat{x}(t)}, \ve{x}(t)$ denote vector representations of quaternions signals $y(t)$, $\hat{x}(t)$ and $x(t)$.
Remark that (\ref{eq:wienerProblemquaternion}) is equivalent to its vector form:
\begin{equation}\label{eq:wienerProblemvector}
  \mathrm{min}\: \Expe{\Vert \ve{\hat{x}}(t) - \ve{x}(t)\Vert^2},
\end{equation}
where $\Vert \cdot \Vert$ is the Euclidean norm of $\bbCj^2$.
The solution to (\ref{eq:wienerProblemvector}) in the Fourier domain is well known \cite{Schreier}
\begin{equation}\label{eq:WienerFilterMatrixGeneralSetup}
  \ve{\hat{X}}(\nu) = \mat{P}_{\ve{x}\ve{y}}(\nu)\mat{P}_{\ve{y}\ve{y}}^{-1}(\nu)\ve{Y}(\nu)
\end{equation}
where $\mat{P}_{\ve{x}\ve{y}}(\nu), \mat{P}_{\ve{y}\ve{y}}(\nu)$ are the usual (cross-) spectral density matrices of $\ve{x}(t), \ve{y}(t)$, respectively.
The Wiener filter for the denoising problem (\ref{eq:denoising_quaternion_problem}) is
\begin{equation}\label{eq:WienerFilterMatrixAdditiveModel}
  \ve{\hat{X}}(\nu) = \mat{P}_{\ve{x}\ve{x}}(\nu)\mat{P}_{\ve{y}\ve{y}}^{-1}(\nu)\ve{Y}(\nu)
\end{equation}
Eq. (\ref{eq:WienerFilterMatrixAdditiveModel}) shows that $\ve{\hat{X}}(\nu)$ is obtained from $\ve{Y}(\nu)$ by 2 successive Hermitian filters, since spectral density matrices are Hermitian -- and so are their sum and inverse.
Introducing an intermediate variable $\ve{Z}$ one gets
\begin{align}
    \ve{Z}(\nu) &= \mat{P}_{\ve{y}\ve{y}}^{-1}(\nu) \ve{Y}(\nu)\\
     \ve{\hat{X}}(\nu) &= \mat{P}_{\ve{x}\ve{x}}(\nu) \ve{Z}(\nu)
\end{align}

Quaternions equivalents are readily obtained using (\ref{eq:app_general_filter}) and definitions of matrix spectral densities in terms of Stokes parameters $S_i$, $i = 0, 1, 2, 3$ \cite[p. 214]{Schreier}:
\begin{align}
   Z(\nu) &= 2\left[(1-\Phi^2_y(\nu))S_{0, y}(\nu)\right]^{-1}  \nonumber\\
    &\qquad \times\left(Y(\nu) + \Phi_y(\nu)\bmmu_y(\nu)Y(\nu)\bmj\right)\label{eq:app:first_filterQuatWiener}\\
   \hat{X}(\nu) &= 2^{-1}S_{0, x}(\nu)\left(Z(\nu)- \bmmu_x(\nu)\Phi_x(\nu)Z(\nu)\bmj\right)\label{eq:app:second_filterQuatWiener}
\end{align}
since Stokes parameters and polarization axis are related like \cite{flamant2017spectral}
$S_0\Phi\bmmu = \bmi S_3 + \bmj S_1 + \bmk S_2$.
Plugging (\ref{eq:app:first_filterQuatWiener}) into (\ref{eq:app:second_filterQuatWiener}) and reorganizing terms yields to the general Wiener filter expression (\ref{eq:wienerFilterquaternionGeneric}).
To obtain the error expression remark that \cite[Theorem 1]{flamant2017spectral}
\begin{equation}\label{appeq:errorWiener}
  \epsilon = \int_{-\infty}^\infty \mathcal{S}(\Gamma_{ee}(\nu))\mathrm{d}\nu
\end{equation}
where $e(t) = \hat{x}(t) - x(t)$.
Using the spectral density definition (\ref{appeq:quaternionSpectralDensity}) together with the Wiener filter expression (\ref{eq:wienerFilterquaternionGeneric}) one gets the optimal error expression (\ref{eq:optimalErrorSigObs}) by developing (\ref{appeq:errorWiener}).
To obtain (\ref{eq:optimalErrorSigNoise}) start by writing explicitly $\Gamma_{yy}(\nu) = \Gamma_{xx}(\nu)+\Gamma_{ww}(\nu)$ such that ($\nu$-dependence omitted):
\begin{align}
  \Gamma_{yy} &= S_{0, x} + S_{0, w} + S_{0, x}\Phi_x\bmmu_x + S_{0, w}\Phi_w\bmmu_w\\
  &= S_{0, y}[1 + \Phi_y\bmmu_y],
\end{align}
where, using $\alpha = S_{0, x}/S_{0, w}$ the frequency domain SNR:
\begin{align}
  S_{0, y} &= S_{0, x} + S_{0, w}\label{appeq:SOy1}\\
  \Phi_y\bmmu_y &= \frac{\alpha}{1+\alpha}\Phi_x\bmmu_x + \frac{1}{\alpha +1}\Phi_w\bmmu_w\label{appeq:SOy2}.
\end{align}
Plugging (\ref{appeq:SOy1}) and (\ref{appeq:SOy2}) into (\ref{eq:optimalErrorSigObs}) yields (\ref{eq:optimalErrorSigNoise}).

\section{Simulation of bivariate white noise}
\label{app:whitenoise}
For sake of completeness we recall some recent results from \cite{flamant2017spectral}.
A bivariate white noise $w(t) = u(t) + \bmi v(t)$ has a constant spectral density given by
\begin{equation}\label{appeq:spectralDensWhiteNoise}
  \Gamma_{ww}(\nu) = \sigma_u^2+\sigma_v^2 + \bmj (\sigma_u^2-\sigma_v^2) + 2\bmk \rho_{uv}\sigma_u\sigma_v.
\end{equation}
where $\sigma_u^2, \sigma_v^2$ are variances of white noises $u(t)$ and $v(t)$, and $\rho_{uv}$ is the correlation between $u(t)$ and $v(t)$.
This spectral density has no $\bmi$-component, meaning that a bivariate white noise is \emph{always} partially linearly polarized.
Importantly, $w(t)$ is unpolarized when $\sigma_u^2 = \sigma_v^2$ and $\rho_{uv} = 0$, \ie when $w(t)$ is \emph{proper} \cite{Picinbono1997a}.

Simulating a bivariate white noise $w(t)$ is equivalent to simulating 2 correlated real white noises $u(t)$ and $v(t)$.
Alternatively \cite{flamant2017spectral}, one can simulate $w(t)$ directly with the desired polarization properties using an unpolarized/polarized parts decomposition.
Let $ 0 \leq \Phi\leq 1$ be the desired degree of polarization, and $\theta \in [-\pi/2, \pi/2]$ the linear polarization orientation angle and $S_{0,w} > 0$ the total power.
Let $w^{\texttt{u}}(t)$ be an unpolarized white noise and $w^{\texttt{p}}(t)$ be a real-valued white noise, both of unit variance and independent from each other.
Then the white noise $w(t)$ constructed as
\begin{equation}\label{eq:UPdecompWN}
  w(t) = \sqrt{1-\Phi}\sqrt{S_{0, w}}w^{\texttt{u}}(t) + \sqrt{\Phi}\sqrt{S_{0, w}}\exp(\bmi \theta)w^{\texttt{p}}(t)
\end{equation}
has spectral density $\Gamma_{ww}(\nu) = S_{0, w} + \bmj \Phi S_{0, w} \cos2\theta + \bmk\Phi S_0\sin2\theta$ where one recognizes a linear polarization state with spherical coordinates $(\Phi, 2\theta, 0)$, see Fig. \ref{fig:poincaresphere}.

\section{Spectral representation of stationary bivariate signals}
\label{app:spectralRep}
We recall some important results from \cite{flamant2017spectral}.
When $x(t)$ is a random bivariate signal the QFT definition (\ref{eq:definitionQFT}) is no longer valid.
Instead it has to be replaced with the spectral representation theorem \cite[Theorem 1]{flamant2017spectral} which states for harmonizable signals $x(t)$ there exist spectral increments $\mathrm{d}X(\nu)$ such that
\begin{equation}
  x(t) = \int_{-\infty}^{+\infty} \mathrm{d}X(\nu)e^{\bmj2\pi\nu t},
\end{equation}
the equality being in the mean-square sense.
Then one defines the quaternion spectral density $\Gamma_{xx}(\nu)$ accordingly \cite{flamant2017spectral} as
\begin{equation}\label{appeq:quaternionSpectralDensity}
  \Gamma_{xx}(\nu)\mathrm{d}\nu = \Expe{\vert \mathrm{d}X(\nu)\vert^2} + \Expe{\mathrm{d}X(\nu)\bmj\overline{\mathrm{d}X(\nu)}}
\end{equation}
where $\Expe{\cdot}$ denotes the mathematical expectation.

Let $x(t)$ and $y(t)$ be two jointly stationary bivariate signals.
These signals are uncorrelated \cite{flamant2017spectral} if and only if, for all $\nu$
\begin{equation}
  \Expe{\mathrm{d}X(\nu)\overline{\mathrm{d}Y(\nu)}} =
  \Expe{\mathrm{d}X(\nu)\bmj\overline{\mathrm{d}Y(\nu)}} = 0.
\end{equation}
This is the quaternion equivalent to saying that the cross-spectral density matrix is zero: $\mat{P}_{xy}(\nu) = 0$.

\bibliography{references}
\bibliographystyle{IEEEtran}

\end{document}